\documentclass[final,5p,times,twocolumn]{elsarticle}
\usepackage[T1]{fontenc}

\DeclareFontShape{T1}{txr}{m}{scit}{<-> ssub * txr/m/sc}{}
\DeclareFontShape{T1}{txr}{bx}{scit}{<-> ssub * txr/bx/sc}{}
\usepackage{amssymb}
\usepackage{amsmath}
\usepackage{amsthm}
\usepackage{booktabs}
\usepackage{array}
\usepackage{tabularx}

\usepackage{graphicx}
\usepackage{xurl}
\usepackage[hypertexnames=false]{hyperref}

\hypersetup{hidelinks}
\hypersetup{hidelinks}
\usepackage{enumitem}
\usepackage{algorithm}
\usepackage{algpseudocode}

\usepackage{float}
\usepackage{dblfloatfix}

\newcolumntype{L}[1]{>{\raggedright\arraybackslash}p{#1}}
\newcolumntype{C}[1]{>{\centering\arraybackslash}p{#1}}
\newcolumntype{Y}{>{\raggedright\arraybackslash}X}

\newtheorem{proposition}{Proposition}[section]
\newtheorem{corollary}{Corollary}[section]
\theoremstyle{definition}
\newtheorem{definition}{Definition}[section]

\begin{document}

\newdimen\arxivSavedPdfHOrigin
\arxivSavedPdfHOrigin=\pdfhorigin
\AddToHookNext{shipout/before}{\global\advance\pdfhorigin by 0.22in}
\AddToHookNext{shipout/after}{\global\pdfhorigin=\arxivSavedPdfHOrigin}

\begin{frontmatter}

\title{Offline-Verifiable Accountability for Cross-Organization Agent Messaging: A Preserved Evidence-Bundle Approach}

\author[a,b]{Adil Alshammari\corref{cor1}}
\ead{aha388@nau.edu}

\author[a,c]{Hayretdin Bahsi}
\ead{Hayretdin.Bahsi@nau.edu}

\cortext[cor1]{Corresponding author.}

\affiliation[a]{
    organization={School of Informatics, Computing, and Cyber Systems, Northern Arizona University},
    country={United States}
}
\affiliation[b]{
    organization={Department of Computer Science, College of Computer and Information Sciences, Majmaah University}, addressline={Majmaah 11952},
    country={Saudi Arabia}
}
\affiliation[c]{
    organization={School of Information Technologies, Tallinn University of Technology},
    country={Estonia}
}
\begin{abstract}
Cross-organization agent workflows require preserved evidence that remains independently verifiable during later audit or dispute review. Such workflows may involve multiple organizations, delegated actions, policy-relevant events, and disputed accountability claims. This is difficult when live systems are unavailable, controlled by one party, or not trusted by all participants. Existing mechanisms provide useful pieces, including authenticated logging, delegation semantics, signed checkpoints, and consistency checks. What remains missing is a verifier-centered event-level bundle for checking evidence sufficiency offline under an explicit policy.

We propose a preserved evidence-bundle model and a policy-controlled offline verifier for agent-to-agent workflow events. Each bundle preserves policy-required evidence, including sender authentication, authenticated log commitment, witness-backed checkpoint evidence, append-only continuity, delegation-aware authorization evidence, and, when issued and required by policy, explicit receiver-signed receipt evidence. 

The verifier accepts only claims supported by the selected policy-required evidence, giving a later dispute reviewer an offline basis for assessing evidence sufficiency. It does not infer delivery or receipt from transport behavior or log inclusion alone.

In a prototype evaluation over 300 complete workflows and 1200 valid preserved bundles, we measure offline verifier-side latency across policy profiles and workflow-event evidence requirements. Checkpoint-context anchoring has the highest verifier-side latency in the current prototype, while events with delegation and workflow-prerequisite evidence require additional verification steps. In targeted negative-evidence tests, all corrupted or policy-insufficient bundles were rejected, with no false acceptance observed. These results support evidence-based audit and dispute review without relying on live services or platform-specific logs.
\end{abstract}

\begin{keyword}
Accountability \sep
Agent-to-agent communication \sep
Offline verification \sep
Evidence bundles \sep
Authenticated logging \sep
Delegation
\end{keyword}
\end{frontmatter}

\section{Introduction}
\label{sec:introduction}
In cross-organization agent workflows, structured messages carry requests, results, and policy-relevant actions across administrative boundaries. Therefore, a later dispute may involve more than whether a message was sent. It may also involve authorship, authenticated logging, policy compliance, delegated authority, or whether receipt is supported by explicit receiver-signed evidence.

Ordinary operational records are a weak basis for resolving such disputes. The needed evidence may be held by one party, changed after the event, or no longer available when the dispute is reviewed. A verifier may also be unable to query the original live system. The accountability evidence therefore has to be preserved in a form that can be checked independently of the runtime services. This issue also appears in recent work on multi-agent systems and AI-agent relationships, where studies discuss contestability, accountability, and credible agent identity when autonomous agents operate across organizational or service boundaries \citep{nguyen_position_2026,lange_we_2025,lin_binding_2025}.

This evidence problem becomes concrete in sectors such as healthcare. Prior work in this domain has examined privacy-preserving medical data protection and controlled access \citep{jakhar_blockchain-based_2024,alshammari_efficient_2025}. Agentic AI and AI-agent systems are increasingly being studied for healthcare workflows and clinical decision support \citep{collaco_role_2026,zhao_ai_2026,zheng_large_2025}. A single workflow may involve several organizations, delegated actions, and later audit or dispute review. In such workflows, a later reviewer may need evidence that can be checked without relying only on the systems that originally produced or stored the event. More broadly, recent surveys describe autonomous, goal-oriented, and workflow-oriented agent systems beyond the healthcare example \citep{acharya_agentic_2025,yu_survey_2025}.  

Industry reporting suggests that this is not only a theoretical concern. For example, a Cloud Security Alliance report states that 53\% of organizations observed AI agents exceeding intended permissions at least occasionally, and 47\% reported a security incident involving AI-agent behavior
\citep{cloud_security_alliance_more_2026}. Surveys and assurance work on agentic large language model (LLM) deployments raise related concerns about permissions, delegated tool use, and third-party service boundaries
\citep{li_security_2025,stamnes_karlsen_securing_2026}. For this paper, the implication is narrower: the verifier should be able to check preserved evidence directly, instead of relying on platform-specific logs, live service access, or informal operational assurances.

We treat agent-to-agent message events as observable workflow events at organizational boundaries. These events give the system a concrete point for attaching and preserving evidence. Later accountability claims are evaluated from the preserved bundle and the selected policy profile. The verifier can therefore assess a specific preserved event without reconstructing the full runtime state of the agent system. 

The examples include LLM-based agents, but the verification model is not tied to an LLM or to a specific agent architecture. It applies to software agents that exchange structured workflow messages and preserve the evidence artifacts required by the selected policy.

Prior work gives useful building blocks, but most of them are studied in separate settings. Transparency-logging and key-transparency work provide authenticated commitments, signed log states, and consistency checks
\citep{dirksen_logpicker_2021,malvai_parakeet_2023}. Delegation and auditable-authorization work addresses authority transfer, authorization provenance, and policy-governed access in multi-party environments
\citep{ferretti_verifiable_2021,li_damfsd_2024}. Formal workflow-delegation studies also show that delegated authority can interact with
separation-of-duty constraints and workflow satisfiability \citep{yang_delegation_2024}. Formally verified accountability protocols define non-repudiation and evidence-driven verdict semantics for message exchange \citep{falanji_momep_2026}.

Our previous paper provided the message-level basis for this work by showing how a verifier can check message authorship and log inclusion from preserved artifacts without contacting the sender or receiver \citep{alshammari_authenticated_2026}. Its scope was limited to message-level evidence at a signed checkpoint. This article moves beyond that baseline to policy-controlled workflow-event accountability. The central question is not only whether individual evidence artifacts are valid, but whether the preserved evidence for a specific workflow event satisfies the evidence checks required by the selected policy profile.

This leads to the following research question: \textbf{How can message-level evidence for agent-to-agent workflows be extended into policy-controlled workflow-event evidence that enables an offline verifier to decide whether a preserved bundle satisfies the selected policy profile and its required evidence checks?}

This question is addressed through four design objectives and claim boundaries. First, evidence is preserved per workflow event because later disputes usually concern a specific workflow action, not only the overall system state. Second, verification is performed offline because live services may be unavailable, controlled by one party, or not trusted during later review. Third, receipt requires explicit receiver-signed evidence because receipt cannot be inferred from transport behavior, storage, or log inclusion. Fourth, acceptance remains bounded by the selected policy so that the verifier does not treat a valid signature, log commitment, or capability chain as proof of delivery, endpoint correctness, semantic correctness, or workflow completion.

The novelty of this paper is the policy-controlled, workflow-event-level verification of evidence sufficiency. For a specific agent-to-agent workflow event, the offline verifier checks whether the preserved bundle satisfies the evidence checks required by the selected policy profile. The key question is not only whether individual artifacts are valid, but whether the preserved evidence is sufficient to support the claim under that policy. Acceptance is interpreted according to these claim boundaries, and the evaluation is claim-matched instead of a general throughput benchmark.

The paper makes three contributions:
\begin{enumerate}[label=\textbf{\arabic*)}, leftmargin=1.5em]
\item \textbf{Preserved evidence-bundle model.} We introduce a verifier-centered evidence-bundle model for agent-to-agent workflow events in cross-organization workflows. The model preserves sender, log, receipt, checkpoint, continuity, and delegation evidence as independently checkable evidence classes.

\item \textbf{Policy-controlled workflow-event verifier.} We define a policy-controlled offline verifier for workflow-event evidence. The selected policy profile determines both which evidence predicates must succeed before the verifier returns \textsc{Accept} and which evidence-level claims are supported by the acceptance decision. Therefore, stronger claims require the corresponding preserved evidence, such as receipt, witness-backed checkpoint, continuity, checkpoint-context, or delegation evidence.

\item \textbf{Claim-matched evaluation.} We implement a prototype and evaluate it using three experiments: policy-level verifier-side latency, event-specific evidence requirements, and targeted negative-evidence diagnostics.
\end{enumerate}

The paper is structured as follows: Section~\ref{sec:related} reviews related
work. Section~\ref{sec:systemmodel} presents the system model, assumptions,
threat model, and evidence-control boundaries. Section~\ref{sec:design}
describes the system design and evidence lifecycle. Section~\ref{sec:offline} presents the policy-controlled offline verification procedure. Section~\ref{sec:semantics} defines the verifier acceptance semantics and bounded claims. Section~\ref{sec:security_analysis} provides an evidence-level security assessment. Section~\ref{sec:evaluation} reports the evaluation results. Section~\ref{sec:discussion} discusses implications and limitations.
Finally, Section~\ref{sec:conclusion} concludes the paper and presents future work.

\section{Related Work}
\label{sec:related}

\begin{table*}[!t]
\centering
\caption{Comparison of related work by verifier-relevant evidence support.}
\label{tab:rw_jisa}
\scriptsize
\renewcommand{\arraystretch}{1.08}
\setlength{\tabcolsep}{3pt}
\begin{tabular*}{\textwidth}{@{\extracolsep{\fill}}lcccccccc}
\toprule
\textbf{Reference} &
\textbf{Year} &
\textbf{Context} &
\textbf{Evidence bundle} &
\textbf{Offline verification} &
\textbf{Signed checkpoint} &
\textbf{Inclusion proof} &
\textbf{Witness / consistency} &
\textbf{Delegation evidence} \\
\midrule
Dirksen et al.~\citep{dirksen_logpicker_2021}   & 2021 & Web PKI & $\times$ & $\triangle$ & $\triangle$ & $\triangle$ & $\checkmark$ & $\times$ \\
Malvai et al.~\citep{malvai_parakeet_2023}   & 2023 & E2EE messaging & $\times$ & $\triangle$ & $\triangle$ & $\triangle$ & $\checkmark$ & $\times$ \\
Li et al.~\citep{li_damfsd_2024}       & 2024 & Medical data & $\times$ & $\triangle$ & $\times$ & $\times$ & $\times$ & $\checkmark$ \\
Ferretti et al.~\citep{ferretti_verifiable_2021} & 2021 & IIoT & $\times$ & $\triangle$ & $\triangle$ & $\triangle$ & $\times$ & $\checkmark$ \\
\textbf{This work} & \textbf{2026} & \textbf{Agent messaging} & \textbf{$\checkmark$} & \textbf{$\checkmark$} & \textbf{$\checkmark$} & \textbf{$\checkmark$} & \textbf{$\checkmark$} & \textbf{$\checkmark$} \\
\bottomrule
\end{tabular*}

\vspace{0.3em}
\parbox{\textwidth}{\footnotesize
\textit{Note:} $\checkmark$ = explicitly supported in the cited design. $\triangle$ = supports a related sub-capability, but not the preserved per-event bundle and policy-controlled offline verification model proposed here. 
$\times$ = not supported or not explicit in the cited design.}
\end{table*}

The closest related work falls into four lines: transparency logging, key transparency, witness-assisted consistency, and delegation-aware authorization. Transparency logging provides append-only log commitments, signed log states, and inclusion or consistency evidence \citep{laurie_certificate_2021,dahlberg_verifiable_2018,dirksen_logpicker_2021}. Key-transparency systems provide verifiable support for identity-to-key bindings and key-directory consistency \citep{malvai_parakeet_2023}. Witness-assisted consistency uses additional observation or consistency evidence to reduce reliance on a single log view \citep{dirksen_logpicker_2021,malvai_parakeet_2023}. Delegation-aware authorization preserves evidence about authority transfer or policy-governed action in multi-party settings \citep{ferretti_verifiable_2021,li_damfsd_2024}. In this paper, delegation means that one actor performs a workflow action on behalf of another actor under preserved authority evidence, such as a capability chain or a policy-bound authorization artifact. These lines provide useful evidence mechanisms, but they do not by themselves define a preserved workflow-event bundle whose evidence sufficiency is checked offline under a selected policy profile.

Much of the related work strengthens one evidence property at a time. Some systems focus on transparency or logged-state verifiability. Others focus on cross-view consistency, delegation semantics, or auditable authorization. On the logging side, the main cryptographic basis is Merkle/hash-tree commitment and certificate-transparency-style append-only logging \citep{merkle_digital_1988,laurie_certificate_2021,dahlberg_verifiable_2018}. These mechanisms underlie the inclusion proofs, signed log-state checkpoints, and consistency proofs used in append-only logs.

Later, transparency-log and verifiable-storage systems use authenticated data
structures to improve scalability, monitoring, and storage-layer verification
\citep{hu_merkle2_2021,fang_legolog_2025,yue_glassdb_2023-1}. These systems address infrastructure problems such as scalable authenticated storage, log monitoring, and storage-layer verification. However, they do not define a policy-controlled preserved workflow-event bundle that allows an offline verifier to assess evidence sufficiency across Merkle inclusion evidence, signed checkpoints, receipt evidence, append-only continuity, and delegation-aware workflow evidence.

At the deployment level, work on multi-ownership digital service chains shows how security responsibilities and security-relevant information can span multiple providers and administrative domains
\citep{repetto_cybersecurity_2026}. This supports our focus on verifier-centered artifacts for offline assessment.

Table~\ref{tab:rw_jisa} uses the verifier's evidence dimensions to compare the selected studies. Here, an evidence bundle means a preserved per-event artifact set for later verifier use. Offline verification means that a third party checks preserved artifacts and local trust inputs without contacting the live services. A signed checkpoint records the committed log state. The inclusion proof binds an event commitment to the signed state. Witness/consistency refers to witness-assisted or cross-view consistency support. Delegation evidence refers to preserved cryptographic evidence of authority for the events.

Parakeet and LogPicker support authenticated commitments and consistency-related checks. They do not preserve per-event workflow evidence for later verification of a workflow event
\citep{dirksen_logpicker_2021,malvai_parakeet_2023}. DAMFSD and the auditable-authorization architecture of Ferretti et al.\ cover a different side of the problem: delegation, authorization provenance, and auditable authorization. They do not, however, provide a policy-controlled workflow-event verifier that evaluates explicit receipt semantics, signed-checkpoint evidence, workflow-event inclusion evidence, workflow-bound delegation evidence, and bounded claim support under a selected policy profile \citep{li_damfsd_2024,ferretti_verifiable_2021}.

Formal work on non-repudiation, evidence-driven verdicts, and accountability shows how preserved evidence can support later accountability decisions \citep{falanji_momep_2026,kusters_accountability_2010}. This motivates our verifier model, but our scope is narrower. The verifier checks the evidence preserved for one workflow event under the selected policy. The decision remains tied to that specific workflow event.

Our previous paper introduced a proof-carrying evidence layer for offline-verifiable agent-to-agent messaging. A verifier could check authorship, payload-hash binding, log inclusion, and the associated signatures from preserved artifacts alone, without contacting the sender, receiver, or a live service \citep{alshammari_authenticated_2026}. However, that work was limited to message-level evidence within a signed log snapshot. The present paper extends that foundation to policy-controlled workflow-event accountability. For a workflow event, the offline verifier checks whether the preserved evidence bundle satisfies the evidence checks required by the selected policy profile. Depending on that profile, the required checks may cover explicit receiver-signed receipt evidence, witness-backed checkpoint evidence, append-only continuity evidence, checkpoint-context evidence, and delegation-aware authorization evidence.

Overall, the related work provides important mechanisms for authenticated state, consistency evidence, authorization provenance, policy-governed access, capability-based access control, accountable credentials, and privacy-preserving access control \citep{dirksen_logpicker_2021,malvai_parakeet_2023,ferretti_verifiable_2021,li_damfsd_2024,zhu_group-capability-based_2025,tegane_extended_2023,cheng_s-cred_2024,hu_towards_2024}. However, these works do not provide a verifier-centered preserved workflow-event bundle whose evidence sufficiency can be checked offline under a selected policy profile for cross-organization agent-message events. Therefore, we frame offline accountability review as an evidence-sufficiency problem: the verifier receives a preserved workflow-event bundle and checks whether the policy-required evidence is present, valid, and properly bound to the reviewed workflow event.

\section{System Model, Assumptions, and Threat Model}
\label{sec:systemmodel}

\subsection{Deployment Context and Accountability Scope}
\label{subsec:context}
The model covers cross-organization workflows in which autonomous or semi-autonomous software agents exchange structured messages across administrative boundaries. A deployment may involve several institutions and several steps, with one organization initiating a request and other parties performing delegated actions. A verifier may later review the
preserved evidence without contacting the original runtime environment. The examples are illustrative and do not limit the domain. The framework applies when accountability depends on preserved cryptographic evidence. It does not assume continued access to online services.

The unit of analysis is the message event, not generalized application state. Each accountability decision is anchored to a preserved evidence bundle for a specific message event and, when required by policy, its checkpoint and delegation context. We do not reconstruct arbitrary execution state. The verifier instead checks bounded claims about message authorship, authenticated logging, explicit receipt, append-only continuity, and delegation-aware authorization.

In this model, each evidence class supports a different bounded claim.
Sender-authenticated inputs support exact-message binding. Inclusion and
signed-checkpoint evidence support authenticated log commitment. Explicit
signed receipts support a receipt claim only when preserved and verified.
Witness evidence supports checkpoint validation. Append-only extension proof
(AOEP) evidence supports continuity between authenticated checkpoints. Stronger checkpoint reasoning requires the corresponding checkpoint evidence. Delegation evidence supports authorization-related claims under policy.

\subsection{Trust Assumptions, Adversary Scope, and Claim Boundaries}
\label{subsec:assumptions}

The framework assumes standard security for the digital signature and hash functions used in the implementation. It also assumes that the offline verifier has the required public keys and trust anchors. If a signing key is compromised, revoked outside the evidence context, or incorrectly bound to an identity, verification is limited to the keys and trust anchors accepted by the verifier.

The transport path is the channel used to transfer messages and evidence artifacts between parties. It may use ordinary operational protections, but it is not trusted as an accountability source: messages or artifacts may be delayed, replayed, omitted, reordered, modified, or replaced before offline verification. Therefore, transport success is treated as operational information only, not as an accountability signal. The offline verifier detects tampering, omission, substitution, rebinding, or missing policy-required evidence only when these affect preserved artifacts checked under the selected policy profile. Message reordering during runtime is detected when it causes the preserved evidence bundle to fail a policy-required evidence check. Thus, reordering is treated as a policy-relative evidence-sufficiency issue, not as a complete reconstruction of transport history.

We model the adversary primarily as an evidence manipulator, not as an attacker with unrestricted runtime control. The adversary may present incomplete bundles, inconsistent checkpoint material, invalid witness-related evidence, or forged capability and workflow references in an attempt to induce an incorrect acceptance decision.

The adversary model does not assume malicious replacement of the offline verifier's trust anchors. It also does not claim protection against compromised signing keys or collusion among all evidence-producing parties. Such cases are treated as trust-anchor, key-management, or deployment failures rather than evidence-verification failures.

The verifier is assumed to execute correctly once provisioned with the required trust anchors, public verification keys, policy inputs, and policy-permitted offline references. It is limited to preserved artifacts and those local inputs. It does not query runtime services, treat transport success as evidence, perform online freshness checks, or reconcile live system views during verification.

In this paper, a witness is an independent checkpoint observation service. Its role is limited to observing checkpoint material, checking consistency with previously observed checkpoints when applicable, and co-signing checkpoint material. It does not receive plaintext payloads, issue receipts, infer receipts, or decide whether a workflow action is correct. Witness evidence is used only for witness-backed checkpoint validity assessment and, when required by policy, append-only continuity validation. When checkpoint-context anchoring is enabled, the verifier may use preserved material from the witnessed checkpoint log (WCL) as context for checkpoint validation.

These claim boundaries are necessary because verifier acceptance is limited to claims justified by the preserved evidence and the selected policy profile. The verifier supports an authenticated logged-message claim when sender-authentication and authenticated logging evidence validate. It supports a receipt claim only when an explicit receiver-signed receipt object is preserved and successfully verified. Receipt cannot be claimed from transport acknowledgments, HTTP status codes, forwarding, storage, log inclusion, or execution side effects. Accordingly, verifier acceptance does not establish delivery, endpoint correctness, or semantic correctness of the workflow action. Runtime LLM threats are also outside the verifier's evidence-level scope, including prompt injection, malicious planning, and unsafe tool use.

Table~\ref{tab:boundaries} summarizes the evidence-control boundaries considered in the threat model. An evidence-control boundary is a point in the evidence lifecycle where evidence is generated, transferred, preserved, or evaluated. These boundaries are not trusted zones. They indicate where artifacts may be modified, omitted, substituted, replayed, or rebound before offline verification.

\begin{table}[t]
\centering
\caption{Evidence-control boundaries and verifier implications.}
\label{tab:boundaries}
\footnotesize
\renewcommand{\arraystretch}{1.06}
\setlength{\tabcolsep}{3pt}
\begin{tabularx}{\columnwidth}{@{}L{0.38\columnwidth}X@{}}
\toprule
\textbf{Boundary} & \textbf{Verifier implication} \\
\midrule

Agent-side domains &
Verify sender identity and policy-bound evidence. \\

Untrusted transport &
Do not treat transport success as evidence. \\

Evidence-producing services &
Check only policy-required evidence. \\

Preserved evidence stores &
Check for omission, substitution, or tampering. \\

Local verifier trust base &
Use local policy, keys, and trust anchors. \\

External action boundary &
Do not infer endpoint correctness or execution. \\

\bottomrule
\end{tabularx}
\end{table}

\subsection{Main Notation}
\label{subsec:notation}
Table~\ref{tab:notation_terms} gives the main notation used in the system description, algorithms, and acceptance semantics. Several paper-specific terms are also used consistently in the sections that follow. An evidence bundle denotes the preserved set of artifacts used by the offline verifier to evaluate one message event. A policy profile specifies which verification predicates are required for acceptance. Authenticated log commitment denotes evidence that a message commitment is included in a signed checkpoint of the append-only log. An explicit signed receipt denotes receiver-signed evidence that supports a receipt claim only when such a receipt object is present and valid.

\begin{table}
\centering
\caption{Main notation.}
\label{tab:notation_terms}
\footnotesize
\renewcommand{\arraystretch}{1.03}
\begin{tabularx}{\columnwidth}{p{0.12\columnwidth} X}
\toprule
\textbf{Notation} & \textbf{Description} \\
\midrule
$m$ & Canonicalized message representation. \\
$h(m)$ & Message commitment derived from $m$. \\
$H(\cdot)$ & Hash or commitment function. \\
$\sigma_S$ & Sender signature over the canonical message. \\
$\sigma_R$ & Explicit receiver-signed receipt. \\
$\sigma_L$ & Log signature over signed-checkpoint fields. \\
$\pi_{\mathrm{incl}}$ & Inclusion proof for the message commitment. \\
$\pi_{\mathrm{ext}}$ & Append-only extension proof (AOEP) between checkpoints. \\
$\mathrm{WCL}$ & Witnessed checkpoint log for checkpoint-context anchoring. \\
$\mathrm{STH}$ & Signed tree head comprising $r$, $t$, $\tau$, and $\sigma_L$. \\
$r$ & Checkpoint root, as used in $\mathrm{STH}$. \\
$t$ & Authenticated tree size or checkpoint index, as used in $\mathrm{STH}$. \\
$\tau$ & Checkpoint context, e.g., timestamp, as used in $\mathrm{STH}$. \\
$\mathcal{C}_{\mathrm{cap}}$ & Capability chain used as delegation evidence. \\
$P$ & Selected policy profile. \\
$CP$ & Set of verification predicates required under policy $P$. \\
$\Phi$ & Universe of verifier predicates. \\
$\varphi$ & Generic verifier predicate. \\
$\mathcal{B}$ & Preserved evidence bundle. \\
\bottomrule
\end{tabularx}
\end{table}

Append-only continuity means that a later authenticated checkpoint validly extends an earlier one. We use $\pi_{\mathrm{ext}}$ for the append-only extension proof (AOEP) that validates this continuity. Checkpoint-context evidence is preserved checkpoint material and associated WCL references that link a signed checkpoint to the workflow and policy context required for checkpoint validation. Witness-backed checkpoint evidence is a preserved witness signature or attestation over checkpoint material, showing that the witness observed the checkpoint material and, when applicable, checked its consistency with earlier checkpoints. Delegation-aware authorization evidence is preserved authority evidence showing that one actor was authorized to perform a workflow action on behalf of another actor, for example through a capability chain or a policy-bound authorization artifact.

The meaning of \textsc{Accept} as the offline verifier's acceptance outcome is defined formally in Section~\ref{sec:semantics} as a policy-relative decision conditioned on the evidence required under $P$.

\subsection{Evidence Support and Claim Limits}
\label{subsec:supports_summary}
Table~\ref{tab:boundaries} identifies where evidence may be generated, transferred, preserved, or evaluated in the evidence lifecycle. In contrast, Table~\ref{tab:supports} summarizes the bounded claim support of each preserved artifact: what the artifact supports and what it does not establish on its own.

\begin{table}[H]
\centering
\caption{Evidence artifacts and claim limits.}
\label{tab:supports}
\footnotesize
\renewcommand{\arraystretch}{1.00}
\setlength{\tabcolsep}{2pt}
\begin{tabularx}{\columnwidth}{@{}L{0.25\columnwidth} L{0.31\columnwidth} Y@{}}
\toprule
\textbf{Artifact} & \textbf{Supports} & \textbf{Does not establish} \\
\midrule

Sender signature &
Message authorship &
Logging, receipt, or delivery \\

Log evidence &
Log commitment &
Receipt, delivery, or endpoint correctness \\

Signed receipt &
Receipt claim &
Delivery or downstream execution \\

Witness evidence &
Checkpoint support &
Receipt, delivery, or real-time equivocation prevention \\

AOEP evidence &
Append-only extension &
Global consistency or collusion resistance \\

Checkpoint context &
Checkpoint anchoring &
Continuity, receipt, or endpoint correctness \\

Capability chain &
Delegated authorization &
Logging, receipt, or workflow completion \\

\bottomrule
\end{tabularx}
\end{table}
\section{System Design and Evidence Lifecycle}
\label{sec:design}

\subsection{Architecture Overview}
\label{subsec:architecture}

\begin{figure*}[!t]
    \centering
\includegraphics[width=0.80\textwidth]{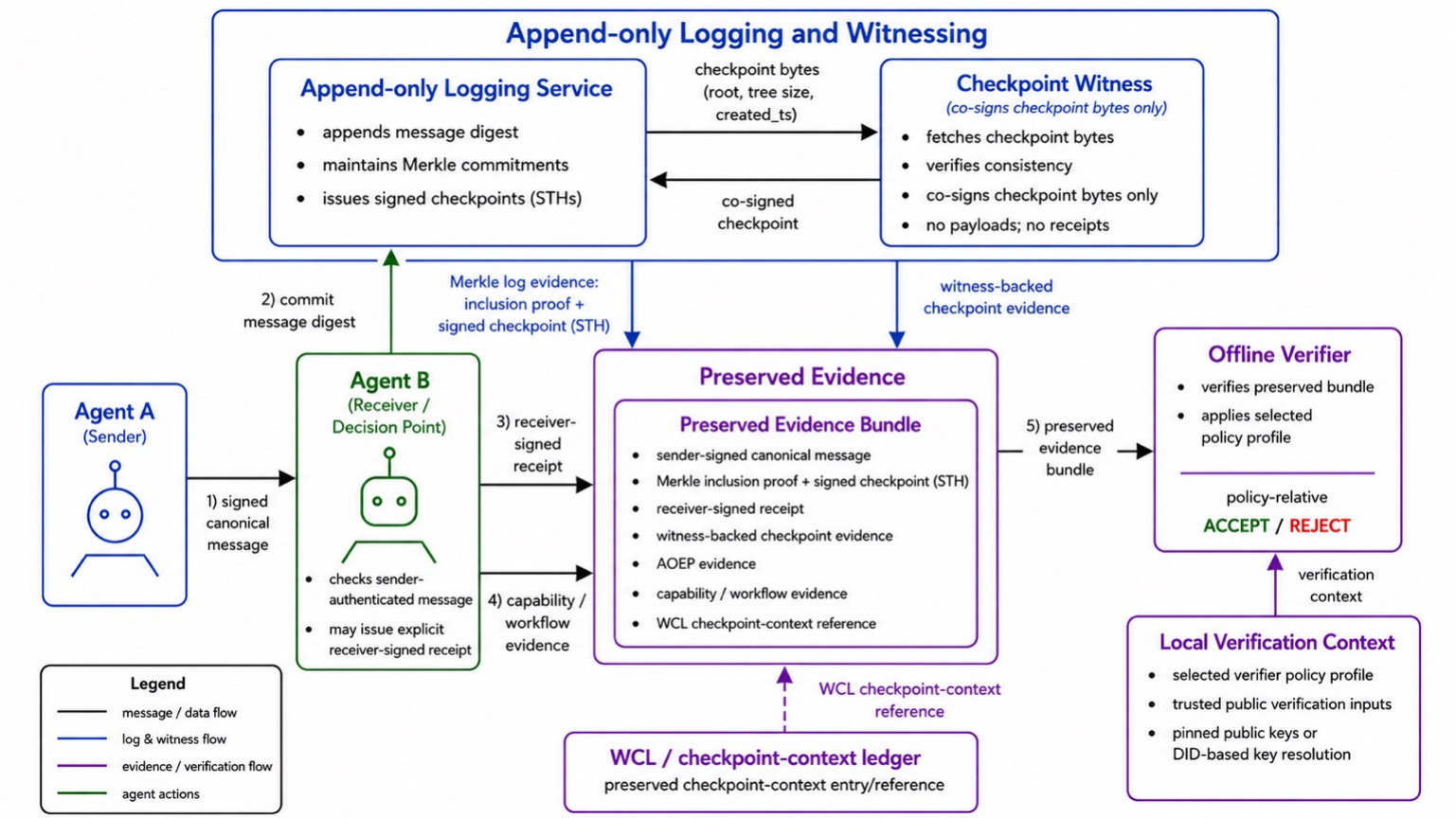}
\caption{Sender-authenticated messages are validated, logged, augmented with policy-required checkpoint evidence, and preserved as evidence bundles for offline verification.}
\label{fig:architecture}
\end{figure*}

Figure~\ref{fig:architecture} shows the high-level architecture and evidence flows of the proposed framework. The design uses five logical roles: a sender agent, a receiver-side decision point, an append-only logging service, a checkpoint-witnessing role, and an offline verifier. Checkpoint witnessing is used only when required by policy. We use the term receiver-side decision point because the receiving organization may implement this role as an agent, gateway, policy service, or workflow controller. The required property is the ability to validate the incoming message and attach the evidence required by policy, not a specific software architecture.

Following the flow in Figure~\ref{fig:architecture}, the sender agent originates a canonicalized message together with sender-authentication evidence. The receiver-side decision point validates the incoming message under the applicable workflow and policy logic. If validation succeeds, a verified digest of the message is committed to the append-only log. The logging service returns authenticated logging artifacts, including an inclusion proof and signed checkpoint evidence. When checkpoint witnessing is enabled, a witness contributes witness-backed checkpoint evidence. This evidence supports witness-backed checkpoint validation and, when required, later append-only continuity validation. The resulting artifacts are assembled into a preserved evidence bundle for offline verification.

The architecture is verifier-centered. Live message processing produces preserved evidence. Later dispute resolution evaluates that evidence offline. This separation allows the verifier to assess the preserved bundle without live access to the sender, receiver-side decision point, logging service, or checkpoint-witnessing role.

\subsection{Evidence Issuance Path}
\label{subsec:issuance}
\begin{figure*}[!t]
  \centering
  \includegraphics[width=0.74\textwidth]{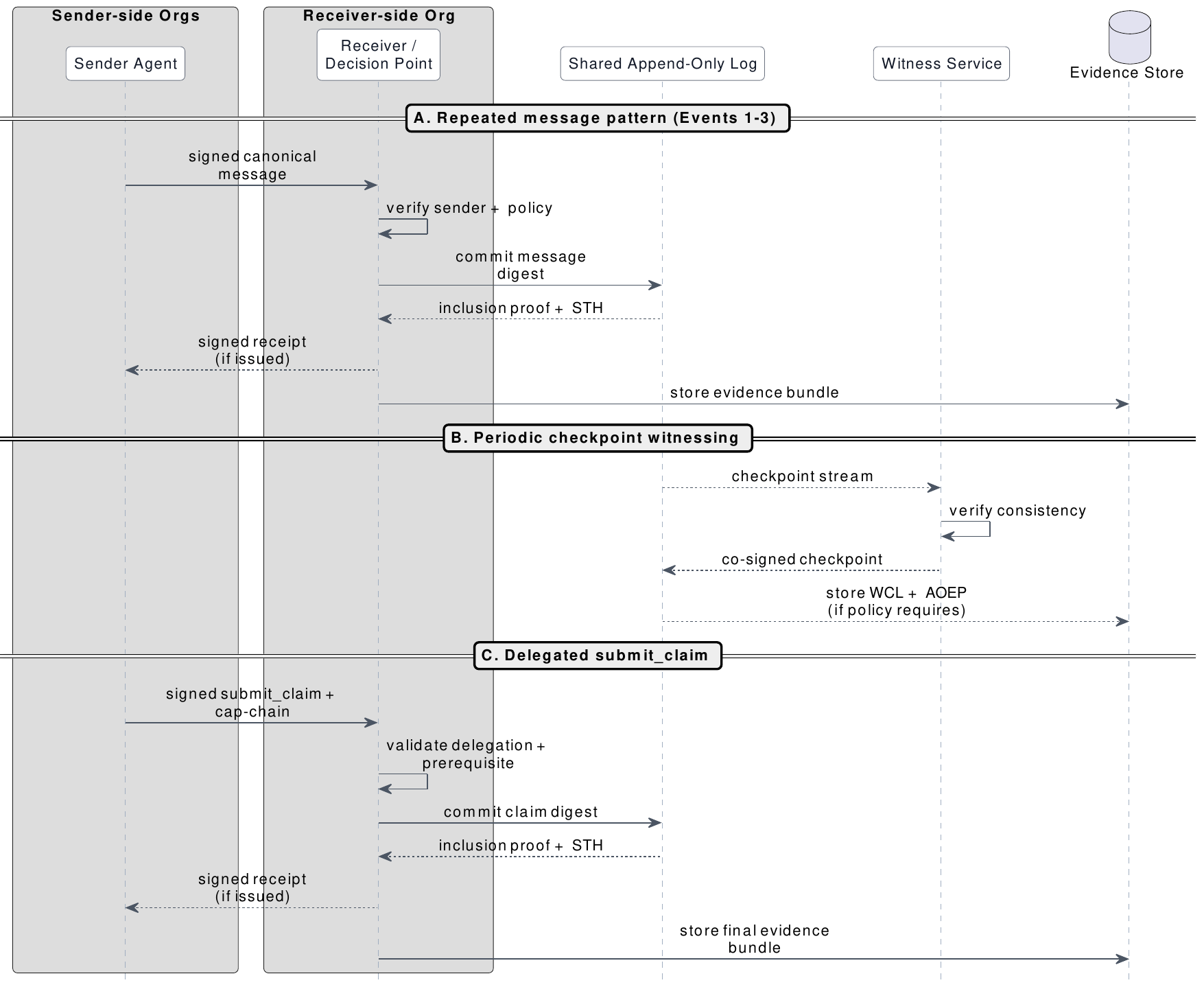}
  \caption{Sequence-level evidence generation for the running workflow example. Event~1 is \texttt{refer\_patient}, Event~2 is \texttt{order\_mri}, Event~3 is \texttt{send\_result}, and Event~4 is \texttt{submit\_claim}. Events~1--3 share the signed-message, validation, logging, and evidence-storage pattern. Event~4 adds delegation and workflow-prerequisite evidence.}
  \label{fig:workflow-sequence}
\end{figure*}

Issuance follows four stages. First, the system validates sender-authenticated inputs and any applicable workflow, policy, or delegation conditions. Second, it commits the verified message digest to the append-only log and obtains authenticated logging evidence. Third, it attaches any additional evidence class required by the active policy profile. Finally, it assembles and preserves the evidence bundle for offline verification.

The running example is a cross-organization healthcare workflow executed by agents on behalf of participating actors. A workflow event is a signed agent-to-agent message instance that represents a workflow action and is preserved with the evidence required for verification under the selected policy profile. In this example, Event~1 is \texttt{refer\_patient}, where a general practitioner (GP) agent refers a patient to a consultant. Event~2 is \texttt{order\_mri}, where a consultant-side agent requests an MRI from a lab or imaging provider. Event~3 is \texttt{send\_result}, where the lab or imaging agent returns the result. Event~4 is \texttt{submit\_claim}, where a claim-related follow-up action is submitted to the insurer. The example is illustrative. The verifier operates on preserved evidence bundles and does not depend on healthcare-specific semantics.

Figure~\ref{fig:workflow-sequence} shows the evidence-generation sequence at the message-event level. Events~1--3 share the repeated signed-message, validation, logging, and evidence-storage pattern, while Event~4 adds delegation and workflow-prerequisite evidence.

The canonical message representation $m$ is the deterministic representation of the workflow message fields used for hashing and signing. In the prototype, each agent-message event contains an event type, case identifier, sender and receiver identifiers, timestamp or context fields, payload commitment, policy-context references, and sender-authentication evidence. This representation is used to verify sender authentication and to derive the message commitment $h(m)$. The prototype evaluates preserved evidence generated from these structured message events, not a transport-level agent communication protocol.

Algorithm~\ref{alg:issuance} summarizes the issuance-side process that turns a validated message event into a preserved evidence bundle.

\begin{algorithm}[t]
\caption{Message processing and evidence bundle issuance}
\label{alg:issuance}
\footnotesize
\begin{algorithmic}[1]
\Require Canonical message representation $m$, sender evidence $\sigma_S$, policy profile $P$
\Ensure Preserved evidence bundle $\mathcal{B}$
\State Validate sender-authenticated inputs over the canonical message representation
\State Derive the message commitment $h(m)$
\State Commit the verified message digest to the append-only logging service
\State Obtain authenticated logging evidence, including inclusion proof and signed checkpoint
\State Apply any issuance-time workflow checks required by policy
\If{$P$ requires witness-backed checkpoint evidence}
    \State Obtain witness-backed checkpoint evidence
\EndIf
\If{$P$ requires checkpoint-context evidence}
    \State Preserve checkpoint-context evidence and associated WCL references
\EndIf
\If{$P$ requires append-only continuity validation and a prior checkpoint is available}
    \State Attach prior-checkpoint context and append-only extension evidence
\EndIf
\If{$P$ requires delegation-aware authorization evidence}
    \State Attach capability and policy-required workflow evidence
\EndIf
\If{an explicit signed receipt is issued and preserved}
    \State Attach explicit signed receipt evidence
\EndIf
\State Assemble the preserved evidence bundle $\mathcal{B}$
\State Persist $\mathcal{B}$ for later offline verification
\State \Return $\mathcal{B}$
\end{algorithmic}
\end{algorithm}

Algorithm~\ref{alg:issuance} follows the same issuance sequence: sender validation and message commitment in lines~1--2, logging evidence in lines~3--4, policy-required evidence attachment in lines~6--20, and bundle assembly and preservation in lines~21--22. Lines~6--20 also show why stronger later claims are not inferred retroactively: receipt, witness, continuity, checkpoint-context, and delegation evidence must be issued, available, and preserved when required by policy.

Evidence generation is therefore policy-aware at issuance time. If a deployment requires stronger append-only continuity validation, the runtime system must preserve the corresponding witness or checkpoint-context artifacts. If it requires explicit receipt semantics, the receiver must issue and preserve a receiver-signed receipt object rather than relying on transport or storage side effects. If delegation-aware reasoning is required, the delegation chain must be attached. The verifier later checks these evidence classes under the selected policy.

\subsection{Evidence Bundle Structure and Verifier Inputs}
\label{subsec:bundle}
The preserved evidence bundle is the main accountability artifact evaluated by
the offline verifier. It contains core evidence for sender authentication and
authenticated log commitment. It may also contain policy-required evidence, such
as explicit receipt evidence, witness-backed checkpoint evidence, AOEP
continuity evidence, checkpoint-context material, or delegation/workflow
evidence. If the selected policy requires an artifact and that artifact is
absent or invalid, the corresponding verifier check fails.

The verifier evaluates the bundle using only local offline inputs: the selected
policy profile, public verification keys, trust anchors, key-resolution
references, and any checkpoint-context references permitted by the policy. It
does not require online access to the sender, receiver, logging service, or
witness. The system therefore separates evidence issuance during live message
processing from later evidence verification under the verifier's local control.

\section{Offline Verification and Policy-Controlled Evidence}
\label{sec:offline}

\subsection{Offline Verification Procedure}
\label{subsec:offline_verification}
\begin{figure*}
\centering
    \includegraphics[width=0.76\textwidth]{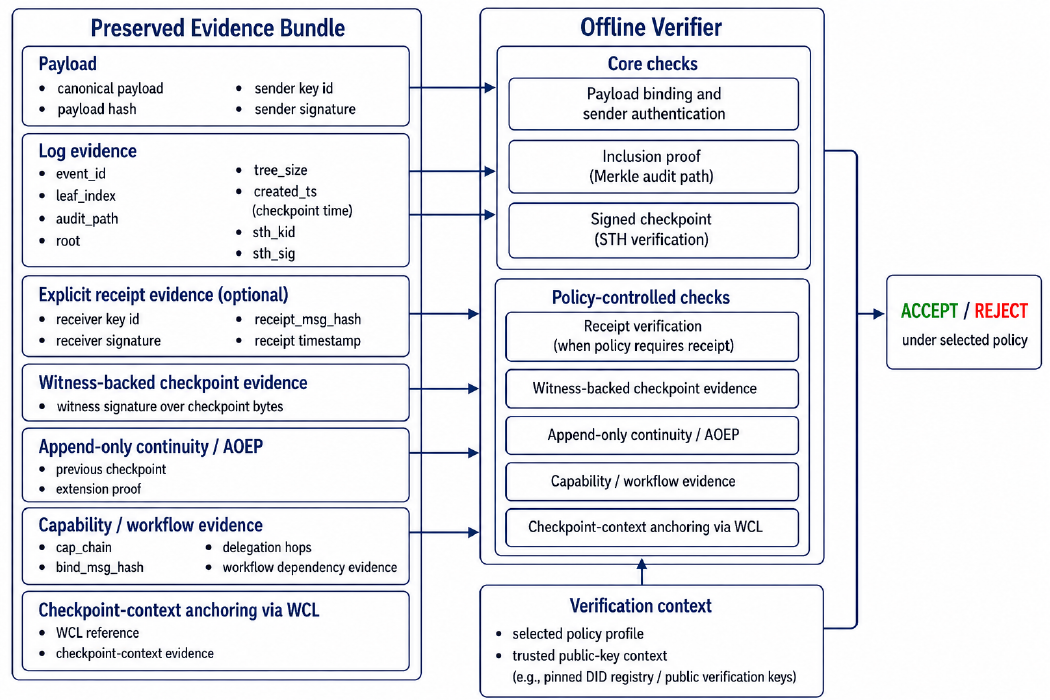}
    \caption{Offline verification inputs and policy-controlled checks over a preserved evidence bundle.}
    \label{fig:evidence_bundle_verifier}
\end{figure*}

The verifier checks a preserved evidence bundle using local offline inputs, including verification keys, trust anchors, and the selected policy profile. It does not use transport logs, online service responses, or live infrastructure queries. The policy profile is a verifier-side input, not a choice made by the evidence producer for a later dispute.

The verifier returns \textsc{Accept} only when all mandatory and policy-required evidence checks succeed. Otherwise, it returns \textsc{Reject}. A minimal profile may require only sender authentication and authenticated log commitment, while stronger profiles may also require receipt, witness, checkpoint-context, append-only continuity, or delegation-aware authorization evidence.

Figure~\ref{fig:evidence_bundle_verifier} shows which bundle components the offline verifier checks. The figure separates always-on core checks from policy-controlled checks and shows the local context used for offline verification, including the selected policy profile and trusted public keys.

Explicit receipt evidence is profile-dependent: it is checked only when the
selected policy requires a receipt-related claim. Receipt verification proves
only the validity of the receiver-signed receipt object. It does not imply
human reading, semantic understanding, successful downstream processing, or
delivery.

\begin{algorithm}[!t]
\caption{Offline verification of an evidence bundle}
\label{alg:offline}
\footnotesize
\begin{algorithmic}[1]
\Require Evidence bundle $\mathcal{B}$, offline verification inputs, policy profile $P$
\Ensure \textsc{Accept} or \textsc{Reject} with the first failing check
\State Validate bundle structure and mandatory fields
\State Verify payload-hash binding for the preserved canonical message representation
\State Verify sender-authentication evidence $\sigma_S$ against the canonical message and reject if absent or invalid
\State Verify authenticated log commitment via inclusion proof and signed checkpoint and reject if verification fails
\If{$P$ requires witness-backed checkpoint validation}
    \State Verify witness evidence against the authenticated checkpoint
\EndIf
\If{$P$ requires checkpoint-context validation}
    \State Verify checkpoint-context evidence against the permitted WCL-backed context
\EndIf
\If{$P$ requires append-only continuity validation}
    \State Verify prior-checkpoint context and AOEP evidence
\EndIf
\If{$P$ requires delegation or policy-required workflow checks}
    \State Verify capability and policy-required workflow evidence
\EndIf
\If{$P$ requires explicit receipt}
    \State Verify explicit signed receipt evidence
\EndIf
\If{all mandatory and policy-required checks succeed}
    \State \Return \textsc{Accept}
\Else
    \State \Return \textsc{Reject} with the first failing check
\EndIf
\end{algorithmic}
\end{algorithm}

Algorithm~\ref{alg:offline} performs the core checks in lines~1--4, the policy-controlled checks in lines~5--19, and the final \textsc{Accept}/\textsc{Reject} decision in lines~20--24. It returns \textsc{Accept} only when all mandatory and policy-required checks succeed. Section~\ref{sec:semantics} defines the formal meaning of \textsc{Accept}.

\subsection{Append-Only Continuity Verification}
\label{subsec:aoep}

Algorithm~\ref{alg:aoep} summarizes the verifier-side AOEP check performed when the selected policy requires append-only continuity between two authenticated checkpoints.

\begin{algorithm}
\caption{AOEP verification between two checkpoints}
\label{alg:aoep}
\footnotesize
\begin{algorithmic}[1]
\Require Prior authenticated checkpoint $\mathrm{STH}_{old}$, later authenticated checkpoint $\mathrm{STH}_{new}$, extension proof $\pi_{\mathrm{ext}}$
\Ensure Valid append-only extension or failure
\State Verify the authenticity of $\mathrm{STH}_{old}$ and $\mathrm{STH}_{new}$
\State Check that checkpoint ordering and applicability conditions hold
\State Verify that $\pi_{\mathrm{ext}}$ validates append-only extension from $\mathrm{STH}_{old}$ to $\mathrm{STH}_{new}$
\If{all checks succeed}
    \State \Return valid append-only extension
\Else
    \State \Return failure
\EndIf
\end{algorithmic}
\end{algorithm}

Authenticated log commitment proves inclusion only relative to the signed checkpoint presented to the verifier. Stronger reasoning across checkpoints requires append-only continuity evidence. Algorithm~\ref{alg:aoep} verifies the checkpoint signatures, ordering conditions, and append-only extension proof in lines~1--3. When required by the selected policy profile, this check shows that the later authenticated checkpoint validly extends the earlier one.

This continuity check supports append-only continuity between the authenticated checkpoints presented to the verifier. It does not support claims based on one isolated checkpoint. Broader claims, such as real-time equivocation prevention, universal consistency across all observers, or resistance to log--witness collusion, require assumptions or evidence outside this AOEP check.

\subsection{Delegation-Aware Authorization Verification}
\label{subsec:cap}
In cross-organization workflows, message admissibility may depend on authorship and logging. It may also depend on whether the sender or intermediary acted under valid delegated authority. For that reason, the bundle may include delegation-aware authorization evidence. The verifier treats this evidence as a separate input, not as something implied by message transmission.

Algorithm~\ref{alg:cap} gives the capability-chain check performed when the selected policy requires delegation-aware authorization evidence.

\begin{algorithm}
\caption{Capability-chain verification}
\label{alg:cap}
\footnotesize
\begin{algorithmic}[1]
\Require Delegation evidence $\mathcal{C}_{\mathrm{cap}}$, message commitment $h(m)$, policy profile $P$
\Ensure Valid delegation evidence or failure
\State Check actor-target consistency across delegation links
\State Check audience progression and parent-link consistency
\State Check scope and constraint consistency
\State Check temporal validity of delegation artifacts
\State Check message binding to the delegation context when required
\State Verify signatures or authenticators on delegation artifacts
\If{all required checks succeed}
    \State \Return valid delegation evidence
\Else
    \State \Return failure
\EndIf
\end{algorithmic}
\end{algorithm}

Delegation-aware authorization evidence supports authorization-related claims under the selected policy. It shows whether the message event satisfied the required capability and workflow predicates. Log commitment, receipt, and workflow-completion claims remain separate evidence checks.

\begin{figure*}
    \centering
    \includegraphics[width=0.80\textwidth]{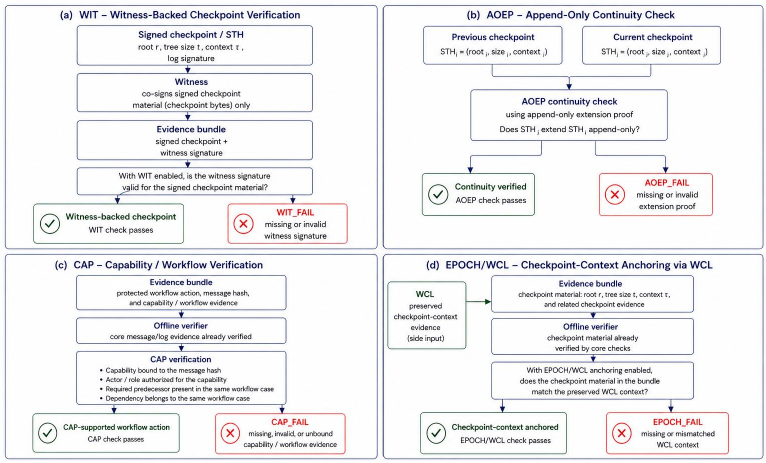}
    \caption{Policy-controlled evidence checks used by the fully enabled verifier profile and the single-control policy comparison.}
    \label{fig:policy_controlled_checks}
\end{figure*}

Figure~\ref{fig:policy_controlled_checks} shows the policy-controlled checks used by the fully enabled verifier profile and the single-control policy comparison. In the figure, EPOCH/WCL denotes the checkpoint-context anchoring control implemented using preserved WCL/checkpoint-context material. These policy-controlled checks are evaluated only when required by the active verifier profile and strengthen different parts of the preserved evidence chain. In the single-control comparison, disabling one control removes only that policy-controlled check. The core payload-binding, sender-authentication, inclusion-verification, and signed-checkpoint checks remain enabled.


\section{Acceptance Semantics and Bounded Claims}
\label{sec:semantics}
Section~\ref{sec:offline} described the offline verifier. This section defines the verifier-side meaning of \textsc{Accept}. The decision is policy-relative and evidence-relative: it records that the preserved bundle satisfies the selected policy, not a general guarantee about delivery, endpoint execution, or workflow correctness.

The statements below define verifier-side semantics, not cryptographic reduction proofs for the full runtime system.

We use the notation from Section~\ref{subsec:notation}. Let $\mathcal{B}$ be the preserved evidence bundle for one message event, and let $P$ be the selected policy profile. Let $CP(P)$ be the set of verification predicates required under $P$. When the policy is clear, we write $CP$.

We use $\varphi$ for verifier predicates to distinguish them from proof artifacts such as $\pi_{\mathrm{incl}}$ and $\pi_{\mathrm{ext}}$. The verifier predicates considered in this paper are
\begin{equation}
\Phi =
\lbrace
\varphi_{\mathrm{fmt}},
\varphi_{\mathrm{hash}},
\varphi_{\mathrm{sig}},
\varphi_{\mathrm{incl}},
\varphi_{\mathrm{sth}},
\varphi_{\mathrm{wit}},
\varphi_{\mathrm{ctx}},
\varphi_{\mathrm{aoep}},
\varphi_{\mathrm{cap}},
\varphi_{\mathrm{rcpt}}
\rbrace .
\label{eq:predicate-universe}
\end{equation}

These predicates correspond to bundle format validity, payload-hash binding, sender-authenticated message binding, Merkle inclusion verification, signed-checkpoint verification, witness-backed checkpoint validation, checkpoint-context validation, append-only extension validation, capability and workflow validation, and explicit signed-receipt validation.

A policy profile selects the predicates required for acceptance:
\begin{equation}
CP(P) \subseteq \Phi .
\label{eq:policy-predicate-set}
\end{equation}

All verifier profiles require the following core predicates:
\begin{equation}
CP_{\mathrm{core}}
=
\lbrace
\varphi_{\mathrm{fmt}},
\varphi_{\mathrm{hash}},
\varphi_{\mathrm{sig}},
\varphi_{\mathrm{incl}},
\varphi_{\mathrm{sth}}
\rbrace
\subseteq CP(P).
\label{eq:core-predicates}
\end{equation}

Thus, authenticated log commitment is checked through both inclusion verification and signed-checkpoint verification. Stronger profiles may also require witness evidence, checkpoint-context evidence, AOEP continuity evidence, capability/workflow evidence, and explicit receipt evidence.

\subsection{Acceptance Conditions}
\label{subsec:acceptance-conditions}

\begin{definition}[Policy-relative acceptance]
\label{def:policy-relative-acceptance}
For an evidence bundle $\mathcal{B}$ and policy profile $P$, acceptance is defined as
\begin{equation}
\mathrm{Accept}_P(\mathcal{B})
\iff
\bigwedge_{\varphi \in CP(P)}
\left(\varphi(\mathcal{B}) = 1\right).
\label{eq:acceptance}
\end{equation}
Thus, $\mathcal{B}$ is accepted only if every mandatory and policy-required predicate succeeds.
\end{definition}

If any required predicate fails, the verifier returns \textsc{Reject}. Acceptance is therefore policy-relative. A weaker policy may accept a bundle that a stronger policy rejects. This means only that the weaker policy required fewer evidence checks.

The preserved message commitment is represented as
\begin{equation}
h(m) = H(m),
\label{eq:message-commitment}
\end{equation}
where $m$ is the preserved canonical message representation and $H(\cdot)$ is the hash or commitment function.

An authenticated checkpoint is represented abstractly as a signed tree head:
\begin{equation}
\sigma_L =
\mathrm{Sign}_L(r \mathbin{\|} t \mathbin{\|} \tau),
\qquad
\mathrm{STH} = (r,t,\tau,\sigma_L),
\label{eq:signed-tree-head}
\end{equation}
where $r$ is the authenticated root, $t$ is the tree size or checkpoint index, $\tau$ is the checkpoint context, and $\sigma_L$ is the log signature. This is a semantic representation. It does not prescribe a byte layout.

\subsection{Claim Support}
\label{subsec:claim-support}

Acceptance and claim support are different. A bundle may be accepted under a policy, but it supports only the claims justified by the predicates required by that policy.

Let $\mathsf{Claim}$ be the set of evidence-level claims considered by the verifier. For each claim $c \in \mathsf{Claim}$, let $\mathsf{Req}(c) \subseteq \Phi$ be the predicates required to support that claim.

\begin{definition}[Policy-relative claim support]
\label{def:policy-relative-claim-support}
A bundle $\mathcal{B}$ supports claim $c$ under policy $P$ only if all predicates required for $c$ are required by $P$ and succeed on $\mathcal{B}$:
\begin{equation}
\mathrm{Supports}_P(\mathcal{B},c)
\iff
\mathsf{Req}(c) \subseteq CP(P)
\ \wedge\
\bigwedge_{\varphi \in \mathsf{Req}(c)}
\left(\varphi(\mathcal{B}) = 1\right).
\label{eq:claim-support}
\end{equation}
\end{definition}

For an authenticated logged-message claim, the required predicates are
\begin{equation}
\mathsf{Req}(c_{\mathrm{logged\_msg}})
=
\lbrace
\varphi_{\mathrm{fmt}},
\varphi_{\mathrm{hash}},
\varphi_{\mathrm{sig}},
\varphi_{\mathrm{incl}},
\varphi_{\mathrm{sth}}
\rbrace .
\label{eq:logged-message-requirements}
\end{equation}

This claim means that the preserved message is bound to sender-authentication evidence and to an authenticated log checkpoint. It does not imply delivery, receiver acknowledgment, endpoint execution, or workflow completion.

A receipt claim requires explicit receiver-signed receipt evidence:
\begin{equation}
\mathsf{Req}(c_{\mathrm{receipt}})
=
\lbrace
\varphi_{\mathrm{fmt}},
\varphi_{\mathrm{hash}},
\varphi_{\mathrm{sig}},
\varphi_{\mathrm{incl}},
\varphi_{\mathrm{sth}},
\varphi_{\mathrm{rcpt}}
\rbrace .
\label{eq:receipt-requirements}
\end{equation}

Thus, transport behavior, HTTP status codes, forwarding, storage, log inclusion, checkpoint validity, witness evidence, AOEP continuity, or capability evidence cannot replace $\varphi_{\mathrm{rcpt}}$.

\begin{definition}[Explicit receipt predicate]
\label{def:explicit-receipt-predicate}
When evaluated, the predicate $\varphi_{\mathrm{rcpt}}(\mathcal{B})$ evaluates to $1$ only if $\mathcal{B}$ contains an explicit receiver-signed receipt object and that object verifies under the applicable receiver key, message commitment, and policy context. Otherwise,
\begin{equation}
\varphi_{\mathrm{rcpt}}(\mathcal{B}) = 0 .
\label{eq:receipt-predicate-zero}
\end{equation}
\end{definition}

\subsection{Bounded Interpretation of \textsc{Accept}}
\label{subsec:bounded-accept}

\begin{proposition}[Bounded interpretation of \textsc{Accept}]
\label{prop:bounded-accept}
If $\mathrm{Accept}_P(\mathcal{B})$ holds, then every predicate in $CP(P)$ has succeeded for $\mathcal{B}$. The accepted bundle supports only the evidence-level claims whose required predicates are included in $CP(P)$ and have succeeded on $\mathcal{B}$. No stronger claim follows from \textsc{Accept} alone.
\end{proposition}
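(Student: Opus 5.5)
The proof unfolds Definitions~\ref{def:policy-relative-acceptance} and~\ref{def:policy-relative-claim-support}; there are three parts, one for each sentence of Proposition~\ref{prop:bounded-accept}.

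\emph{First part.} Assume $\mathrm{Accept}_P(\mathcal{B})$ holds. By Eq.~\eqref{eq:acceptance}, this is equivalent to the conjunction $\bigwedge_{\varphi\in CP(P)}(\varphi(\mathcal{B})=1)$. Hence every predicate in $CP(P)$ evaluates to $1$ on $\mathcal{B}$. By Eq.~\eqref{eq:core-predicates}, this includes the core predicates $CP_{\mathrm{core}}$.

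\emph{Second part.} Fix a claim $c\in\mathsf{Claim}$. I would show both inclusion directions.
\begin{itemize}
\item If $\mathsf{Req}(c)\subseteq CP(P)$, then every $\varphi\in\mathsf{Req}(c)$ lies in $CP(P)$ and therefore succeeded by the first part. Both conjuncts of Eq.~\eqref{eq:claim-support} then hold, so $\mathrm{Supports}_P(\mathcal{B},c)$ holds.
\item Conversely, $\mathrm{Supports}_P(\mathcal{B},c)$ requires $\mathsf{Req}(c)\subseteq CP(P)$ by its first conjunct.
\end{itemize}
So, under acceptance, the claims supported are exactly those with $\mathsf{Req}(c)\subseteq CP(P)$. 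Equivalently, they are those whose required predicates are included in $CP(P)$ and succeeded. As an instance, Eq.~\eqref{eq:logged-message-requirements} equals $CP_{\mathrm{core}}$, so the logged-message claim is supported under every profile.

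\emph{Third part: "no stronger claim."} I read this as follows. For any claim $c$ with $\mathsf{Req}(c)\not\subseteq CP(P)$, knowing $\mathrm{Accept}_P(\mathcal{B})$ does not entail $\mathrm{Supports}_P(\mathcal{B},c)$. This is immediate, since the first conjunct of Eq.~\eqref{eq:claim-support} fails independently of $\mathcal{B}$.

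A semantically stronger reading would say that acceptance carries no information about predicates outside $CP(P)$. For this reading I would argue by construction. Take $\varphi^\ast\in\mathsf{Req}(c)\setminus CP(P)$. Then exhibit a bundle $\mathcal{B}'$ that satisfies every predicate in $CP(P)$ but has $\varphi^\ast(\mathcal{B}')=0$. The natural case is $\varphi^\ast=\varphi_{\mathrm{rcpt}}$: omit the receipt object, and Definition~\ref{def:explicit-receipt-predicate} forces $\varphi_{\mathrm{rcpt}}(\mathcal{B}')=0$. The other predicates are handled similarly, by omitting or corrupting the corresponding optional artifact.

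The main obstacle is justifying that omitting or corrupting an artifact outside $CP(P)$ leaves the $CP(P)$ predicates unaffected, i.e., that the predicates are evaluated on separable parts of the bundle. I would argue this from the verifier structure in Algorithm~\ref{alg:offline}: each policy-controlled check reads only its own evidence class and is skipped when not required. The core format check must also tolerate absent optional fields, which the bundle description in Section~\ref{subsec:bundle} permits. Finally, claims outside $\mathsf{Claim}$, such as delivery or endpoint correctness, are not expressible via any $\mathsf{Req}(c)\subseteq\Phi$. They therefore cannot follow from acceptance, which is a statement only about predicates in $\Phi$.
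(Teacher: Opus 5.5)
Your proposal is correct and follows the paper's proof. The paper likewise unfolds Definition~\ref{def:policy-relative-acceptance} and Definition~\ref{def:policy-relative-claim-support}, and it reads ``no stronger claim'' as the failure of the first conjunct $\mathsf{Req}(c)\subseteq CP(P)$. Your exact characterization of supported claims under acceptance is a harmless sharpening, but the constructive ``semantically stronger reading'' is not needed: it relies on a separability assumption about the predicates that the abstract definitions do not provide.
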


\begin{proof}
By Definition~\ref{def:policy-relative-acceptance}, $\mathrm{Accept}_P(\mathcal{B})$ means that all predicates in $CP(P)$ evaluate to $1$. By Definition~\ref{def:policy-relative-claim-support}, a claim $c$ is supported only when $\mathsf{Req}(c) \subseteq CP(P)$ and all predicates in $\mathsf{Req}(c)$ evaluate to $1$. Therefore, acceptance supports only the claims whose required predicates are satisfied under the selected policy. Claims requiring predicates outside $CP(P)$ do not follow from \textsc{Accept}.
\end{proof}

\begin{proposition}[Policy monotonicity under fixed predicate semantics]
\label{prop:policy-monotonicity}
Let $P_1$ and $P_2$ be two policy profiles evaluated with the same bundle representation, trust anchors, verification keys, verifier inputs, and predicate definitions. If
\begin{equation}
CP(P_1) \subseteq CP(P_2),
\label{eq:policy-subset}
\end{equation}
then, for any evidence bundle $\mathcal{B}$,
\begin{equation}
\mathrm{Accept}_{P_2}(\mathcal{B})
\Rightarrow
\mathrm{Accept}_{P_1}(\mathcal{B}).
\label{eq:policy-monotonicity}
\end{equation}
The converse does not necessarily hold.
\end{proposition}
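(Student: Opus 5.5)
The proof unfolds Definition~\ref{def:policy-relative-acceptance} and uses the subset hypothesis~\eqref{eq:policy-subset}. The hypothesis that $P_1$ and $P_2$ share the bundle representation, trust anchors, keys, verifier inputs and predicate definitions is used in one place. It guarantees that each predicate $\varphi\in\Phi$ denotes a single function of $\mathcal{B}$, independent of which profile invokes it. Then $\varphi(\mathcal{B})=1$ evaluated under $P_2$ means the same thing as $\varphi(\mathcal{B})=1$ evaluated under $P_1$. I will state this explicitly first, because it is the only genuinely substantive point: without it, a predicate such as $\varphi_{\mathrm{rcpt}}$ or $\varphi_{\mathrm{cap}}$, whose check depends on the policy context, could change value between profiles, and the implication would fail.

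For the forward direction, fix $\mathcal{B}$ and assume $\mathrm{Accept}_{P_2}(\mathcal{B})$. By~\eqref{eq:acceptance}, $\varphi(\mathcal{B})=1$ for every $\varphi\in CP(P_2)$. Take any $\varphi\in CP(P_1)$. By~\eqref{eq:policy-subset}, $\varphi\in CP(P_2)$, so $\varphi(\mathcal{B})=1$. Hence the conjunction over $CP(P_1)$ holds, and $\mathrm{Accept}_{P_1}(\mathcal{B})$ follows by Definition~\ref{def:policy-relative-acceptance} again.

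For the claim that the converse need not hold, I will give a counterexample. Let $CP(P_1)=CP_{\mathrm{core}}$ and $CP(P_2)=CP_{\mathrm{core}}\cup\{\varphi_{\mathrm{rcpt}}\}$. Take a bundle $\mathcal{B}$ on which all core predicates succeed but which carries no receipt object. By Definition~\ref{def:explicit-receipt-predicate}, $\varphi_{\mathrm{rcpt}}(\mathcal{B})=0$. So $\mathrm{Accept}_{P_1}(\mathcal{B})$ holds while $\mathrm{Accept}_{P_2}(\mathcal{B})$ fails.

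The bundle used in the counterexample must exist. Any valid signed, logged message without a receipt is such a bundle, as produced by Algorithm~\ref{alg:issuance} when no receipt is issued. The paper's evaluation likewise contains valid bundles accepted under minimal profiles. I expect no real obstacle beyond stating the fixed-semantics assumption cleanly.
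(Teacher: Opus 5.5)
Your proof is correct and follows the paper's. The forward direction is the same unfolding of Definition~\ref{def:policy-relative-acceptance} via the subset hypothesis, and your receipt-based counterexample makes concrete the paper's one-line remark that the converse fails because $P_2$ may require additional predicates. One side remark is inaccurate: the paper's evaluation never runs a core-only profile (E1 only disables one control at a time from \texttt{ALL\_ON}), so the counterexample bundle's existence should rest on Algorithm~\ref{alg:issuance} with no receipt issued, as you already argue, not on the evaluation.
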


\begin{proof}
If $\mathrm{Accept}_{P_2}(\mathcal{B})$ holds, then every predicate in $CP(P_2)$ succeeds. Since $CP(P_1) \subseteq CP(P_2)$, every predicate required by $P_1$ also succeeds. Therefore, $\mathrm{Accept}_{P_1}(\mathcal{B})$ holds. The converse may fail because $P_2$ may require additional predicates.
\end{proof}

This monotonicity statement applies only when one policy strengthens another by adding predicates. It does not apply if the policies use different trust anchors, freshness rules, key-resolution rules, predicate definitions, or external verifier context.

\subsection{Corollary and Interpretation}
\label{subsec:corollary-interpretation}

\begin{corollary}
\label{cor:no-receipt-no-claim}
Let $\mathcal{B}$ be evaluated under policy profile $P$. If $\varphi_{\mathrm{rcpt}} \notin CP(P)$ or $\varphi_{\mathrm{rcpt}}(\mathcal{B}) = 0$, then $\mathcal{B}$ does not support a receipt claim under $P$:
\begin{equation}
\neg \mathrm{Supports}_P(\mathcal{B},c_{\mathrm{receipt}}).
\label{eq:no-receipt-no-claim}
\end{equation}
\end{corollary}

\begin{proof}
By Equation~\ref{eq:receipt-requirements}, a receipt claim requires $\varphi_{\mathrm{rcpt}}$. By Definition~\ref{def:policy-relative-claim-support}, the claim is supported only if all required predicates are required by $P$ and succeed on $\mathcal{B}$. Therefore, if $\varphi_{\mathrm{rcpt}}$ is not required by $P$, or if it evaluates to $0$, the receipt claim is unsupported.
\end{proof}

Thus, receipt cannot be claimed from transport behavior, log inclusion, checkpoint validity, witness evidence, AOEP continuity, checkpoint-context evidence, or capability evidence. Similarly, \textsc{Accept} does not establish endpoint execution correctness, semantic correctness, workflow completion, universal consistency, real-time equivocation prevention, or runtime protection against LLM-specific threats.


\section{Evidence-Level Security Assessment}
\label{sec:security_analysis}
Section~\ref{sec:semantics} defined the formal acceptance and claim-support semantics. This section assesses evidence-level security threats by mapping them to the preserved evidence they may invalidate, omit, modify, or rebind. It does not provide a separate cryptographic reduction proof or a full runtime-security analysis.

\subsection{Verifier-Supported Security Properties} \label{subsec:properties}

Table~\ref{tab:verifier_properties} summarizes the verifier-supported evidence properties and the evidence required for each property. These properties are derived from preserved evidence checks under the selected policy profile.

\begin{table*}[!t]
\centering
\caption{Verifier-supported security properties and required evidence.}
\label{tab:verifier_properties}
\footnotesize
\renewcommand{\arraystretch}{1.05}
\begin{tabularx}{\textwidth}{C{0.06\textwidth} L{0.30\textwidth} X}
\toprule
\textbf{ID} & \textbf{Property} & \textbf{Required evidence and scope} \\
\midrule
P1 & Authenticated log commitment &
Inclusion proof and signed checkpoint; verified checkpoint only. \\

P2 & Sender-authenticated message binding &
Sender-authenticated message fields and payload-to-log binding evidence. \\

P3 & Append-only continuity &
Prior-checkpoint context and append-only extension evidence; policy-enabled. \\

P4 & Witness-backed checkpoint validity &
Witness evidence; does not imply receipt, delivery, or real-time equivocation prevention. \\

P5 & Capability and policy-required workflow checks &
Delegation-aware authorization evidence and required workflow evidence. \\

P6 & Checkpoint-context validation &
Checkpoint-context evidence and permitted WCL context; policy-enabled. \\

\bottomrule
\end{tabularx}
\end{table*}

The properties cover message binding, authenticated logging, checkpoint reasoning, and delegation-aware authorization. Explicit receipt semantics are handled separately because a receipt claim requires a preserved and valid receiver-signed receipt object.

The distinction among P3, P4, and P6 is deliberate. P3 concerns append-only extension between authenticated checkpoints. P4 concerns witness-contributed
checkpoint evidence. P6 concerns contextual anchoring of checkpoint material
under the selected policy.

The properties in Table~\ref{tab:verifier_properties} should be read together with the acceptance semantics in Section~\ref{sec:semantics}.

\subsection{Evidence-Level Security Threats and Mapping}
\label{subsec:evidence_failure_mapping}

\begin{table*}[!t]
\centering
\caption{Evidence-level security threats mapped to verifier-supported properties.}
\label{tab:evidence_failure_mapping}
\footnotesize
\renewcommand{\arraystretch}{1.08}
\begin{tabularx}{\textwidth}{C{0.04\textwidth} L{0.17\textwidth} L{0.19\textwidth} X C{0.08\textwidth}}
\toprule
\textbf{ID} & \textbf{Security threat} & \textbf{Affected evidence} & \textbf{Evidence issue} & \textbf{Props.} \\
\midrule

T1 & Identity spoofing &
Signed-message submission &
False or impersonated sender identity. & P2 \\

T2 & Log tampering &
Log evidence &
Inclusion proof, signed checkpoint, or log material modified. & P1 \\

T3 & Rebinding &
Payload-to-log binding &
Commitment, leaf index, or Merkle path modified or rebound. & P1, P2 \\

T4 & Repudiation &
Sender evidence &
Sender denies authorship despite preserved sender authentication. & P2 \\

T5 & Tampering or omission &
Witness/AOEP/context &
Required evidence missing or modified. & P3, P4, P6 \\

T6 & Unauthorized delegation &
Workflow/delegation &
Capability or workflow evidence forged, rebound, or missing. & P5 \\

\bottomrule
\end{tabularx}
\end{table*}

Table~\ref{tab:evidence_failure_mapping} links each evidence-level security threat to the affected evidence class and the corresponding verifier-supported property. The mapping is evidence-level rather than deployment-wide: it
shows how identity spoofing, log tampering, rebinding, repudiation, tampering or omission, and unauthorized delegation may affect the evidence checked by the verifier.

This assessment is narrower than a full runtime threat inventory. It explains which preserved evidence may be invalidated, omitted, modified, or rebound by each evidence-level threat, and which verifier-supported property would be affected. The mapping also provides the basis for the targeted negative-evidence cases evaluated later.

\section{Evaluation}
\label{sec:evaluation}
\subsection{Evaluation Scope and Methodology}
\label{subsec:evaluation_scope}

Table~\ref{tab:evaluation_design} summarizes the evaluation design. The performance part of the evaluation measures offline verifier-side latency, and the diagnostic part measures rejection behavior over preserved evidence bundles. It does not measure memory utilization, network delay, delivery, runtime LLM safety, end-to-end orchestration, or workflow throughput. The three experiments separate verifier-side latency attribution, workflow-event evidence requirements, and diagnostic rejection behavior.

\begin{table*}
\centering
\caption{Evaluation design and claim-matched purpose of each experiment.}
\label{tab:evaluation_design}
\footnotesize
\renewcommand{\arraystretch}{1.08}
\setlength{\tabcolsep}{5pt}
\begin{tabularx}{\textwidth}{@{}C{0.07\textwidth} L{0.22\textwidth} L{0.30\textwidth} X@{}}
\toprule
\textbf{Exp.} & \textbf{Evaluation question} & \textbf{Method} & \textbf{Interpretation} \\
\midrule

E1 &
Which policy checks add verifier time? &
Single-control comparison over valid \texttt{submit\_claim} bundles. &
Latency attribution relative to the \texttt{ALL\_ON} baseline. \\

E2 &
Does latency vary by event evidence? &
Paired comparison across the four workflow events. &
Event-level evidence analysis, not throughput benchmarking. \\

E3 &
Does invalid or policy-insufficient evidence fail at the expected check? &
Targeted negative-evidence cases mapped to checks, threats, and properties. &
Diagnostic rejection coverage, not exhaustive adversarial search. \\

\bottomrule
\end{tabularx}
\end{table*}

The evaluated workflow contains four event types: \texttt{refer\_patient}, \texttt{order\_mri}, \texttt{send\_result}, and \texttt{submit\_claim}. These are the same four events introduced earlier in the running cross-organization healthcare scenario shown in Figure~\ref{fig:workflow-sequence}. Briefly, the general practitioner (GP) refers the patient to the consultant, the consultant requests the MRI, the lab or imaging provider returns the result, and a follow-up claim action is submitted to the insurer. The labels are used as a running example. Verification uses preserved evidence bundles and does not depend on healthcare-specific semantics.

Each workflow case was generated as a sequence of four signed agent-to-agent message events with distinct case identifiers, actor identifiers, event timestamps, message commitments, authenticated logging artifacts, and policy-required evidence references. The 300 cases use the same workflow structure but differ in these generated identifiers and preserved evidence artifacts. Across these cases, we generated 1200 valid preserved evidence bundles. In addition to the core verifier checks, the fully enabled profile requires witness validation, AOEP continuity validation, delegation/workflow checks, and checkpoint-context anchoring.

\subsection{Experimental Environment}
\label{subsec:experimental_environment}

All timing results report offline verifier-side processing over preserved
evidence bundles. They exclude application start-up time, manual inspection, agent execution, and artifact-generation steps not executed by the offline
verifier. The prototype was evaluated on a Microsoft Surface Laptop Studio 2
running Windows 11 Home version 10.0.26200, with an Intel Core i7-13700H
processor, 14 cores, 20 logical processors, 32 GB RAM, and Python 3.9.13 in a virtual environment. All measurements were collected in controlled local execution on an x64/AMD64 architecture. The verifier and evaluation scripts were implemented in Python using standard cryptographic, web, and statistical-analysis libraries.

\subsection{Statistical Treatment}
\label{subsec:statistical_treatment}
Verifier-side latency is measured from the start of parsing and checking a preserved evidence bundle to the production of the final \textsc{Accept}/\textsc{Reject} decision. For negative-evidence cases, the interval ends at the first-failing verifier check. We report medians and P95 values because this latency may have a tail. The median gives the typical verifier-side latency, and P95 captures higher-latency behavior relevant to deployment. For workflow-event medians, 95\% bootstrap confidence intervals are reported in
Figure~\ref{fig:event_latency} to show the stability of the median estimates
without relying on a parametric normality assumption.

Workflow-event measurements are paired because each complete workflow case
contributes one observation for each event type. We therefore use the Friedman
test for the repeated-measures comparison, Kendall's \(W\) for effect size, and
Wilcoxon signed-rank tests with Holm correction for post-hoc comparisons. The
negative-evidence suite is treated diagnostically because its purpose is to
test rejection behavior rather than latency differences.

\subsection{Policy-Level Latency--Assurance Trade-off}
\label{subsec:policy_tradeoff}
\begin{table*}
\centering
\caption{Experiment E1 within-experiment verifier-side latency attribution under single-control policy comparisons for valid \texttt{submit\_claim} bundles (\(n=300\) per profile).}
\label{tab:profile_tradeoff}
\footnotesize
\renewcommand{\arraystretch}{1.10}
\setlength{\tabcolsep}{6pt}
\begin{tabular*}{0.96\textwidth}{@{\extracolsep{\fill}} l l r r r @{}}
\toprule
\textbf{Profile} &
\textbf{Disabled verifier control} &
\textbf{Median [95\% CI] (ms)} &
\textbf{P95 (ms)} &
\textbf{Reduction vs. \texttt{ALL\_ON}} \\
\midrule
\texttt{ALL\_ON}   & None                         & 785.52 [774.38--799.62] & 1064.33 & -- \\
\texttt{NO\_WIT}   & Witness validation           & 745.60 [738.85--755.89] & 954.92  & 5.1\% \\
\texttt{NO\_AOEP}  & AOEP continuity              & 761.93 [754.90--772.95] & 1063.04 & 3.0\% \\
\texttt{NO\_CAP}   & Capability/workflow check    & 761.82 [751.83--775.11] & 1027.94 & 3.0\% \\
\texttt{NO\_EPOCH} & Checkpoint-context anchoring & 183.04 [174.52--200.43] & 417.64  & 76.7\% \\
\bottomrule
\end{tabular*}
\end{table*}

Experiment E1 evaluates verifier-side latency attribution for policy-controlled evidence requirements. The comparison fixes the \texttt{submit\_claim} event
because this event exercises the richest accountability path in the implemented workflow: prior-event dependency, delegation-aware capability evidence, witness-backed checkpoint evidence, AOEP continuity, and checkpoint-context anchoring.

Table~\ref{tab:profile_tradeoff} reports a single-control policy comparison.
Each comparison disables one policy-controlled verifier control while keeping
the same corpus and timing procedure. The purpose is to attribute verifier-side
latency to policy-controlled checks relative to the \texttt{ALL\_ON} baseline
within the same experiment. It is not intended to compare absolute latency values
with the workflow-event analysis in E2. Accordingly, E1 and E2 answer different
timing questions. E1 is a single-control policy comparison using
\texttt{submit\_claim} bundles, whereas E2 is a paired event-type comparison
within complete workflow cases.

All 300 valid \texttt{submit\_claim} bundles were accepted under each tested profile in Experiment E1. The \texttt{ALL\_ON} profile serves as the within-experiment baseline for this single-control policy comparison. Across the E1 profiles, median verifier-side latency ranges from 183.04 ms under \texttt{NO\_EPOCH} to 785.52 ms under \texttt{ALL\_ON}. Disabling checkpoint-context anchoring produced the largest median reduction, 76.7\% relative to \texttt{ALL\_ON}. The remaining single-control comparison profiles produced smaller latency reductions.

These reductions should not be interpreted as recommendations to disable
verifier controls. Each disabled control weakens a specific assurance component:
\texttt{NO\_WIT} removes witness-backed checkpoint validation,
\texttt{NO\_AOEP} removes append-only continuity validation,
\texttt{NO\_CAP} removes delegation and workflow-prerequisite validation, and
\texttt{NO\_EPOCH} removes checkpoint-context anchoring. The comparison
therefore exposes the intended latency--assurance trade-off.
\subsection{Workflow-Event Evidence Requirements}
\label{subsec:event_eval}

For E2, the first three event types follow the same core evidence path: signed-message validation, authenticated logging, signed-checkpoint evidence, and preserved bundle checking. The \texttt{submit\_claim} event follows the same path but also requires prior-event evidence from \texttt{send\_result} and delegation/workflow-prerequisite evidence. This difference explains the event-level comparison in Figure~\ref{fig:event_latency}.

\begin{figure*}

\centering
\includegraphics[width=0.70\textwidth]{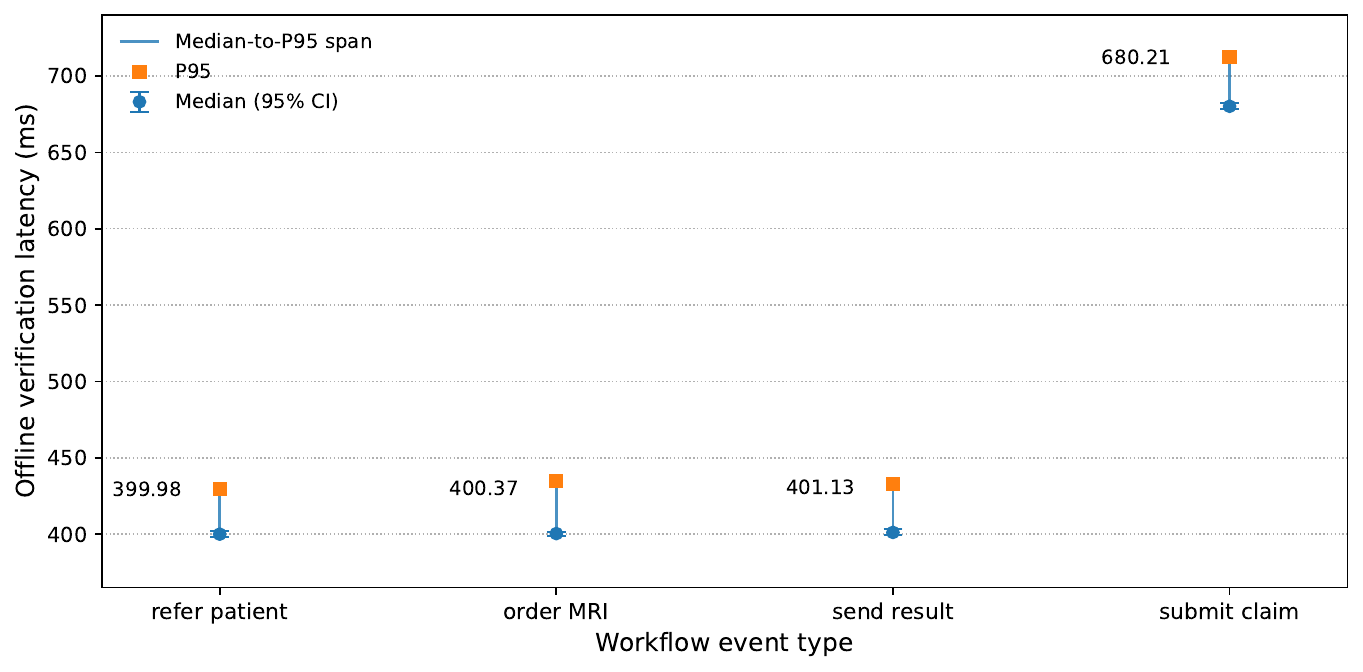}
\caption{Offline verification latency across workflow event types under the fully enabled verifier profile (\(n=300\) paired workflows). Circles show median latency with 95\% bootstrap confidence intervals, square markers show P95 latency, and vertical spans show the median-to-P95 range. The y-axis uses a truncated lower bound for readability. The \texttt{submit\_claim} event requires the most verification time because it exercises the richest workflow-accountability path.}
\label{fig:event_latency}
\end{figure*}
Experiment E2 evaluates whether verifier-side latency changes with the evidence required by each workflow event. Figure~\ref{fig:event_latency} summarizes median latency, 95\% bootstrap confidence intervals, P95 latency, and median-to-P95 spans for the four workflow event types under the fully enabled policy profile. This comparison is interpreted as a workflow-event evidence-requirement analysis, not as a general throughput benchmark. The verifier is not simply checking a message signature. Depending on the event, it may also validate workflow prerequisites, delegated capability evidence, append-only continuity, witness-backed checkpoint evidence, and checkpoint-context anchoring.

As shown in Figure~\ref{fig:event_latency}, the first three event types have
nearly identical median verifier-side latencies: 399.98 ms for
\texttt{refer\_patient}, 400.37 ms for \texttt{order\_mri}, and 401.13 ms for
\texttt{send\_result}. In contrast, \texttt{submit\_claim} requires a higher
median latency of 680.21 ms. Its P95 latency is also higher at 712.70 ms,
compared with 429.83--434.84 ms for the other three event types. The Friedman
test indicates a statistically significant event-type effect
(\(\chi^2=536.740\), \(p=5.20\times10^{-116}\)), with Kendall's
\(W=0.596\). Holm-corrected Wilcoxon signed-rank tests show that
\texttt{submit\_claim} differs significantly from each of the other event
types, while the first three event types are not significantly separated.

This behavior follows the evidence distinction stated before Figure~\ref{fig:event_latency}: \texttt{submit\_claim} adds prior-event and delegation/workflow-prerequisite checks beyond the core evidence path. Therefore, the event-type study identifies where additional verifier-side latency is introduced.

\subsection{Negative-Evidence Diagnostic Coverage}
\label{subsec:tamper_eval}
Experiment E3 focuses on rejection behavior, not latency comparison. Each case
introduces one targeted evidence corruption or workflow misuse operation.

Table~\ref{tab:negative_evidence_diagnostic} maps each case to the verifier
check expected to fail first, the corresponding evidence-level security threat,
and the verifier-supported property.

Each negative-evidence case was generated from an otherwise valid evidence bundle by applying one targeted mutation or misuse operation. The cases are not intended to be mutually exclusive attack classes. Each case is designed to exercise a specific verifier predicate and confirm rejection at the expected first-failing check.

\begin{table*}
\centering
\caption{Negative-evidence cases mapped to evidence-level security threats and verifier-supported properties.}
\label{tab:negative_evidence_diagnostic}
\footnotesize
\renewcommand{\arraystretch}{1.08}
\setlength{\tabcolsep}{4pt}
\begin{tabularx}{\textwidth}{@{}C{0.05\textwidth} L{0.22\textwidth} C{0.11\textwidth} C{0.09\textwidth} X@{}}
\toprule
\textbf{Case} & \textbf{Verifier check} & \textbf{Threat ID} & \textbf{Property} & \textbf{Mutation or misuse} \\
\midrule
C1  & Payload binding & T3 & P2 & Canonical payload modified after issuance. \\
C2  & Merkle inclusion & T3 & P1 & Audit-path sibling hash replaced. \\
C3  & Merkle inclusion & T3 & P1 & Leaf index modified while preserving the bundle structure. \\
C4  & Signed checkpoint validation & T2 & P1 & Signed checkpoint root modified. \\
C5  & Append-only continuity & T5 & P3 & AOEP proof element modified between checkpoints. \\
C6  & Capability binding & T6 & P5 & Capability scope modified. \\
C7  & Capability binding & T6 & P5 & Capability evidence rebound to another message commitment. \\
C8  & Witness validation & T5 & P4 & Witness signature over checkpoint material modified. \\
C9  & Workflow prerequisite omission & T6 & P5 & Prior bundle for \texttt{submit\_claim} removed. \\
C10 & Workflow prerequisite misbinding & T6 & P5 & Prior \texttt{send\_result} replaced by another event type. \\
C11 & Workflow prerequisite misbinding & T6 & P5 & Prior bundle from another workflow case used. \\
C12 & Workflow prerequisite misbinding & T6 & P5 & Prior bundle from another examination/resource identifier used. \\
C13 & Witness-evidence omission & T5 & P4 & Required witness evidence removed. \\
C14 & Checkpoint-context anchoring & T5 & P6 & WCL/checkpoint-context material for checkpoint anchoring corrupted. \\
C15 & Signed checkpoint validation & T2 & P1 & Previous checkpoint signature modified. \\
C16 & Append-only continuity & T5 & P3 & AOEP extension-size relation modified. \\
C17 & Sender authentication & T1 & P2 & Sender signature over canonical message corrupted. \\
\bottomrule
\end{tabularx}

\vspace{0.25em}
\begin{minipage}{0.98\textwidth}
\scriptsize
\emph{Note:} \(T_i\) refers to the evidence-level security threat in Table~\ref{tab:evidence_failure_mapping}, and \(P_i\) refers to the verifier-supported property in Table~\ref{tab:verifier_properties}.
\end{minipage}
\end{table*}

Across the 17-case diagnostic suite, every manipulated bundle or bundle not allowed by the selected policy was rejected, and no false acceptance was observed. The cases cover all verifier predicate classes exercised by the evaluated fully enabled profile, including payload binding, sender authentication, Merkle inclusion, signed-checkpoint validation, witness validation, AOEP continuity, capability binding, workflow prerequisites, and checkpoint-context anchoring. These results support the claim that offline verification enforces evidence
sufficiency under the selected policy, rather than merely returning a generic
binary decision. The first-failing check also provides a structured diagnostic
signal for interpreting why the preserved bundle is insufficient under the selected policy.

\section{Discussion}
\label{sec:discussion}

\subsection{Interpretation of Results} \label{subsec:discussion_interpretation} 
The results answer the research question within the paper's stated limits. They show that message-level evidence can be extended to workflow-event evidence in agent-to-agent workflows. This extension is policy-controlled: the offline verifier returns \textsc{Accept} for a workflow-event bundle only when the preserved evidence satisfies the selected policy profile. The main result is not that every operational fact about a workflow action is confirmed. Instead, \textsc{Accept} has a bounded and
policy-relative meaning. Sender signatures support authorship. Log evidence supports authenticated commitment. Receiver-signed receipts support only a receipt claim when present and required. Capability evidence supports delegation-aware authorization.

Experiment E1 shows a measurable latency--assurance trade-off. The single-control comparison results should be read as within-experiment latency attribution: they identify which policy-controlled checks contribute most to verifier-side latency relative to the \texttt{ALL\_ON} baseline. In the current prototype, checkpoint-context anchoring contributes most to verification time. This higher latency is likely caused by the extra work needed to parse and match WCL/checkpoint-context material, resolve checkpoint-context references, and validate the preserved context against the current evidence bundle. This does not weaken the value of checkpoint-context anchoring. It identifies the main optimization target when stronger offline assurance is required at scale. Possible optimizations include indexing WCL entries by checkpoint identifier, caching already verified checkpoint contexts, pre-validating WCL material, and compacting context records.

Experiment E2 shows that verifier-side latency depends on the evidence required by the workflow event. The first three events, \texttt{refer\_patient}, \texttt{order\_mri}, and \texttt{send\_result}, have similar latency because they follow the same core evidence path: sender-authenticated message checking, payload binding, authenticated logging, inclusion verification, and signed-checkpoint validation. By contrast, \texttt{submit\_claim} adds prior-event dependency checking against \texttt{send\_result} and delegation/workflow evidence. Therefore, the verifier must process additional evidence before accepting the corresponding claim. This explains why \texttt{submit\_claim} requires more verification time, while the other three events remain close to core logged-message verification under the implemented policy.

Experiment E3 shows that the rejection reason is also important. A rejected bundle provides more than a binary outcome. It also points to the verifier check that failed first. This supports audit and dispute resolution because the result includes both a policy-relative decision and a structured rejection reason. A failed Merkle-inclusion check indicates that the presented message commitment is not authenticated by the supplied checkpoint. A failed witness check means that the checkpoint lacks the required witness support. A failed workflow-prerequisite check means that required prior workflow evidence is absent or incorrectly bound.

Together, these results show that the central design choice is not simply whether evidence can be verified offline, but which evidence classes should be required for the dispute being reviewed. The verifier makes an evidence-sufficiency decision: it determines whether the preserved bundle satisfies the selected policy profile and whether the reviewed evidence-level claim is supported by the required preserved evidence.

This distinction matters in dispute-specific review. In an authorship or repudiation dispute, the relevant evidence would normally be sender-authentication evidence, payload binding, and authenticated log commitment. In a receipt dispute, the verifier must require an explicit receiver-signed receipt object, because receipt is not supported by log inclusion, storage, or transport behavior alone. In a delegation dispute, the relevant evidence is the capability chain and any policy-required workflow evidence showing that the actor was authorized to act on behalf of another actor. In a workflow-prerequisite dispute, such as whether a \texttt{submit\_claim} event was properly supported by a prior \texttt{send\_result}, the verifier checks the preserved prior-event evidence and its binding to the current bundle. If the dispute concerns checkpoint continuity or checkpoint context, the profile should require the corresponding AOEP, witness-backed checkpoint evidence, or WCL/checkpoint-context material.

This interpretation connects the evaluation results to practical policy selection. Latency alone should not determine the verifier profile. A lighter profile may be sufficient for authorship and authenticated log-commitment disputes, while a stronger profile is justified when the dispute concerns receipt, delegated authority, checkpoint continuity, checkpoint context, or workflow prerequisites. The evidence bundle therefore provides a structured basis for dispute-specific accountability. When the required evidence validates, it identifies which evidence-level claim is supported. When a bundle is rejected, it identifies which required evidence is missing or invalid. In both cases, it helps prevent stronger claims from being inferred from the preserved artifacts.

\subsection{Limitations and Threats to Validity}
\label{subsec:limitations_validity}

Several limitations qualify the interpretation of the results. First, the reported timings are verifier-side prototype measurements rather than production measurements. They show feasibility and relative latency attribution in the evaluated implementation, but they should not be treated as universal runtime constants. Larger deployments, multi-host execution, concurrent agents, different storage backends, and longer checkpoint histories may change the absolute latency. The current evaluation also does not measure operational deployment factors such as network delay, organizational key-management processes, or independent witness availability.

A second limitation is that the timing experiments answer different questions. The single-control policy comparison attributes verifier-side latency to policy-controlled verifier checks relative to its own \texttt{ALL\_ON} baseline. The workflow-event analysis compares event types within complete workflow cases. For this reason, absolute latency values should be read within that analysis, not as a universal runtime benchmark. 

A third limitation concerns checkpoint-context anchoring. Its verification time reflects both the role of this evidence and the current prototype implementation choices. The result identifies where verifier-side latency is introduced. It does not show that checkpoint-context validation is inherently expensive. The optimization directions discussed above, such as indexing, caching, and pre-validating WCL material, should therefore be evaluated in larger implementations.

Fourth, the workflow evaluation uses a fixed four-step workflow. This setup is sufficient to compare simple message events with a dependency- and delegation-heavy event. However, it does not establish performance or coverage for all possible workflows. Longer traces, branching dependencies, multiple delegation hops, and cross-case policy dependencies require further evaluation.

Fifth, the negative-evidence suite is diagnostic rather than exhaustive. It uses
targeted mutations to exercise known verifier predicates. It is not exhaustive
fuzzing or adaptive adversarial search over malformed bundles. It also does not
cover all deployment failures. Endpoint compromise, key compromise, collusion,
and semantic errors in agent decisions remain outside the framework guarantees.

Finally, the framework evaluates preserved evidence only when such evidence is available. If no preserved bundle is available, the verifier cannot support the corresponding evidence-level claim. Absence of a bundle is not treated as proof of the underlying workflow fact.

\section{Conclusion and Future Work}
\label{sec:conclusion}

\subsection{Conclusion} \label{subsec:conclusion_summary}
This paper introduced a preserved evidence-bundle framework for offline-verifiable accountability in cross-organization agent messaging. The framework extends message-level authorship and log-inclusion evidence for agent-to-agent workflows into policy-controlled workflow-event accountability. Its central contribution is a policy-controlled offline verifier in which \textsc{Accept} supports only the evidence-level claims justified by the preserved evidence required under the selected policy. 

The evaluation shows that the approach is feasible within these bounded claim semantics. Stronger verifier profiles introduce additional verifier-side latency, and \texttt{submit\_claim} requires more verification time because it carries delegation and workflow-prerequisite evidence. The negative-evidence cases further show that manipulated bundles or bundles not allowed by the selected policy are rejected under the intended checks.

\subsection{Future Work} \label{subsec:future_work}
Future work will first evaluate the prototype in larger multi-host deployments with concurrent agents and longer checkpoint histories. We will then optimize checkpoint-context and WCL validation to reduce verifier-side latency. We also plan to study longer workflows with branching dependencies and multi-hop delegation across organizations. Finally, we will investigate reduced-disclosure evidence views that preserve offline verifiability while exposing less workflow information to external verifiers.






\bibliographystyle{elsarticle-num-names}
\bibliography{Paper2_JISA_1}

@inproceedings{yu_survey_2025,
	title = {A {Survey} on {Agent} {Workflow} – {Status} and {Future}},
	issn = {2769-3554},
	url = {https://ieeexplore.ieee.org/document/11082076},
	doi = {10.1109/ICAIBD64986.2025.11082076},
	urldate = {2026-01-29},
	booktitle = {2025 8th {International} {Conference} on {Artificial} {Intelligence} and {Big} {Data} ({ICAIBD})},
	author = {Yu, Chaojia and Cheng, Zihan and Cui, Hanwen and Gao, Yishuo and Luo, Zexu and Wang, Yijin and Zheng, Hangbin and Zhao, Yong},
	month = may,
	year = {2025},
	note = {ISSN: 2769-3554},
	pages = {770--781},
}

@inproceedings{dahlberg_verifiable_2018,
	address = {Cham},
	title = {Verifiable {Light}-{Weight} {Monitoring} for {Certificate} {Transparency} {Logs}},
	isbn = {978-3-030-03638-6},
	doi = {10.1007/978-3-030-03638-6_11},
	language = {en},
	booktitle = {Secure {IT} {Systems}},
	publisher = {Springer International Publishing},
	author = {Dahlberg, Rasmus and Pulls, Tobias},
	editor = {Gruschka, Nils},
	year = {2018},
	pages = {171--183},
}

@article{acharya_agentic_2025,
	title = {Agentic {AI}: {Autonomous} {Intelligence} for {Complex} {Goals}—{A} {Comprehensive} {Survey}},
	volume = {13},
	issn = {2169-3536},
	shorttitle = {Agentic {AI}},
	url = {https://ieeexplore.ieee.org/document/10849561},
	doi = {10.1109/ACCESS.2025.3532853},
	urldate = {2026-01-31},
	journal = {IEEE Access},
	author = {Acharya, Deepak Bhaskar and Kuppan, Karthigeyan and Divya, B.},
	year = {2025},
	pages = {18912--18936},
}

@inproceedings{alshammari_authenticated_2026,
	title = {Authenticated and {Offline}-{Verifiable} {Agent}-to-{Agent} {Messaging} for {LLM} {Agents}},
	url = {https://ieeexplore.ieee.org/document/11393740},
	doi = {10.1109/CCWC67433.2026.11393740},
	urldate = {2026-03-13},
	booktitle = {2026 {IEEE} 16th {Annual} {Computing} and {Communication} {Workshop} and {Conference} ({CCWC})},
	author = {Alshammari, Adil and Assiri, Sareh and Bahsi, Hayretdin},
	month = jan,
	year = {2026},
	pages = {1424--1430},
}

@article{zhao_ai_2026,
	title = {{AI} agent in healthcare: applications, evaluations, and future directions},
	volume = {2},
	copyright = {2026 The Author(s)},
	issn = {3005-1460},
	shorttitle = {{AI} agent in healthcare},
	url = {https://www.nature.com/articles/s44387-026-00076-4},
	doi = {10.1038/s44387-026-00076-4},
	language = {en},
	number = {1},
	urldate = {2026-04-01},
	journal = {npj Artificial Intelligence},
	publisher = {Nature Publishing Group},
	author = {Zhao, Lina and Liu, Shengrui and Xin, Tangsiwei and Tan, Jiawen and Wang, Xiaoran and Li, Yafang and Bian, Zihao and Chen, Yiyang and Kong, Fanyi and Bian, Jinwei and Qian, Chen and Zhang, Zongjiu},
	month = mar,
	year = {2026},
	pages = {31},
}

@article{ferretti_verifiable_2021,
	title = {Verifiable and auditable authorizations for smart industries and industrial {Internet}-of-{Things}},
	volume = {59},
	issn = {2214-2126},
	url = {https://www.sciencedirect.com/science/article/pii/S2214212621000831},
	doi = {10.1016/j.jisa.2021.102848},
	urldate = {2026-04-01},
	journal = {Journal of Information Security and Applications},
	author = {Ferretti, Luca and Longo, Francesco and Merlino, Giovanni and Colajanni, Michele and Puliafito, Antonio and Tapas, Nachiket},
	month = jun,
	year = {2021},
	pages = {102848},
}

@article{dirksen_logpicker_2021,
	title = {{LogPicker}: {Strengthening} {Certificate} {Transparency} {Against} {Covert} {Adversaries}},
	issn = {2299-0984},
	shorttitle = {{LogPicker}},
	url = {https://petsymposium.org/popets/2021/popets-2021-0066.php},
	doi = {10.2478/popets-2021-0066},
	urldate = {2026-04-10},
	journal = {Proceedings on Privacy Enhancing Technologies},
	author = {Dirksen, Alexandra and Klein, David and Michael, Robert and Stehr, Tilman and Rieck, Konrad and Johns, Martin},
	year = {2021},
    number = {4},
    pages = {184--202},
}

@inproceedings{malvai_parakeet_2023,
	address = {San Diego, CA, USA},
	title = {Parakeet: {Practical} {Key} {Transparency} for {End}-to-{End} {Encrypted} {Messaging}},
	isbn = {978-1-891562-83-9},
	shorttitle = {Parakeet},
	url = {https://www.ndss-symposium.org/wp-content/uploads/2023/02/ndss2023_f545_paper.pdf},
	doi = {10.14722/ndss.2023.24545},
	language = {en},
	urldate = {2026-04-10},
	booktitle = {Proceedings 2023 {Network} and {Distributed} {System} {Security} {Symposium}},
	publisher = {Internet Society},
	author = {Malvai, Harjasleen and Kokoris-Kogias, Lefteris and Sonnino, Alberto and Ghosh, Esha and Oztürk, Ercan and Lewi, Kevin and Lawlor, Sean},
	year = {2023},
}

@article{li_damfsd_2024,
	title = {{DAMFSD}: {A} decentralized authorization model with flexible and secure delegation},
	volume = {27},
	issn = {2542-6605},
	shorttitle = {{DAMFSD}},
	url = {https://www.sciencedirect.com/science/article/pii/S2542660524002580},
	doi = {10.1016/j.iot.2024.101317},
	urldate = {2026-04-10},
	journal = {Internet of Things},
	author = {Li, Minghui and Xue, Jingfeng and Liu, Zhenyan and Suo, Yiran and Lei, Tianwei and Wang, Yong},
	month = oct,
	year = {2024},
	pages = {101317},
}

@article{collaco_role_2026,
	title = {The role of agentic artificial intelligence in healthcare: a scoping review},
	volume = {9},
	copyright = {2026 The Author(s)},
	issn = {2398-6352},
	shorttitle = {The role of agentic artificial intelligence in healthcare},
	url = {https://www.nature.com/articles/s41746-026-02517-5},
	doi = {10.1038/s41746-026-02517-5},
	language = {en},
	number = {1},
	urldate = {2026-04-10},
	journal = {npj Digital Medicine},
	publisher = {Nature Publishing Group},
	author = {Collaco, Bernardo G. and Haider, Syed Ali and Prabha, Srinivasagam and Gomez-Cabello, Cesar A. and Genovese, Ariana and Wood, Nadia G. and Bagaria, Sanjay P. and Gopala, Narayanan and Tao, Cui and Forte, Antonio Jorge},
	month = mar,
	year = {2026},
	pages = {345},
}

@article{lange_we_2025,
	title = {We need accountability in human–{AI} agent relationships},
	volume = {1},
	copyright = {2025 The Author(s)},
	issn = {3005-1460},
	url = {https://www.nature.com/articles/s44387-025-00041-7},
	doi = {10.1038/s44387-025-00041-7},
	language = {en},
	number = {1},
	urldate = {2026-04-10},
	journal = {npj Artificial Intelligence},
	publisher = {Nature Publishing Group},
	author = {Lange, Benjamin and Keeling, Geoff and Manzini, Arianna and McCroskery, Amanda},
	month = nov,
	year = {2025},
	pages = {38},
}

@misc{nguyen_position_2026,
	title = {Position: {Multi}-{Agent} {Algorithmic} {Care} {Systems} {Demand} {Contestability} for {Trustworthy} {AI}},
	shorttitle = {Position},
	url = {http://arxiv.org/abs/2603.20595},
	doi = {10.48550/arXiv.2603.20595},
	language = {en},
	urldate = {2026-04-14},
	publisher = {arXiv},
	author = {Nguyen, Truong Thanh Hung and Fournier, Hélène and Jackson, Piper and Itoh, Makoto and Freeman, Shannon and Richard, Rene and Cao, Hung},
	month = mar,
	year = {2026},
	note = {arXiv:2603.20595 [cs]},
}

@misc{lin_binding_2025,
	title = {Binding {Agent} {ID}: {Unleashing} the {Power} of {AI} {Agents} with accountability and credibility},
	shorttitle = {Binding {Agent} {ID}},
	url = {http://arxiv.org/abs/2512.17538},
	doi = {10.48550/arXiv.2512.17538},
	language = {en},
	urldate = {2026-04-14},
	publisher = {arXiv},
	author = {Lin, Zibin and Zhang, Shengli and Liao, Guofu and Tao, Dacheng and Wang, Taotao},
	month = dec,
	year = {2025},
	note = {arXiv:2512.17538 [cs]},
}

@article{zheng_large_2025,
	title = {Large language model-driven agents in nursing practice: {A} scoping review},
	volume = {12},
	issn = {2352-0132},
	shorttitle = {Large language model-driven agents in nursing practice},
	url = {https://www.sciencedirect.com/science/article/pii/S2352013225001309},
	doi = {10.1016/j.ijnss.2025.10.007},
	number = {6},
	urldate = {2026-04-14},
	journal = {International Journal of Nursing Sciences},
	author = {Zheng, Xinglin and Zou, Huina and Wu, Linjing and Dong, Peihuang and Yuan, Wenhui and Chen, Yuan},
	month = nov,
	year = {2025},
	pages = {532--540},
}

@misc{cloud_security_alliance_more_2026,
	title = {More {Than} {Half} of {Organizations} {Experience} {AI} {Agent} {Scope} {Violations}, {Cloud} {Security} {Alliance} {Study} {Finds}},
	url = {https://cloudsecurityalliance.org/press-releases/2026/04/16/more-than-half-of-organizations-experience-ai-agent-scope-violations-cloud-security-alliance-study-finds},
	urldate = {2026-04-20},
	journal = {Cloud Security Alliance},
	author = {{Cloud Security Alliance}},
	year = {2026},
}

@inproceedings{hu_merkle2_2021,
	title = {Merkle2: {A} {Low}-{Latency} {Transparency} {Log} {System}},
	issn = {2375-1207},
	shorttitle = {Merkle2},
	url = {https://ieeexplore.ieee.org/document/9519459/citations},
	doi = {10.1109/SP40001.2021.00088},
	urldate = {2026-04-24},
	booktitle = {2021 {IEEE} {Symposium} on {Security} and {Privacy} ({SP})},
	author = {Hu, Yuncong and Hooshmand, Kian and Kalidhindi, Harika and Yang, Seung Jin and Popa, Raluca Ada},
	month = may,
	year = {2021},
	note = {ISSN: 2375-1207},
	pages = {285--303},
}

@inproceedings{fang_legolog_2025,
	title = {{LegoLog}: {A} configurable transparency log},
	issn = {2995-1356},
	shorttitle = {{LegoLog}},
	url = {https://ieeexplore.ieee.org/document/11129295},
	doi = {10.1109/EuroSP63326.2025.00066},
	urldate = {2026-04-24},
	booktitle = {2025 {IEEE} 10th {European} {Symposium} on {Security} and {Privacy} ({EuroS}\&{P})},
	author = {Fang, Vivian and Dauterman, Emma and Ravoor, Akshay and Dewan, Akshit and Popa, Raluca Ada},
	month = jun,
	year = {2025},
	note = {ISSN: 2995-1356},
	pages = {1082--1103},
}

@article{falanji_momep_2026,
	title = {{MoMEP}: {A} formally verified protocol with modifiable signed messages},
	volume = {98},
	issn = {2214-2126},
	shorttitle = {{MoMEP}},
	url = {https://www.sciencedirect.com/science/article/pii/S2214212626000086},
	doi = {10.1016/j.jisa.2026.104378},
	urldate = {2026-04-24},
	journal = {Journal of Information Security and Applications},
	author = {Falanji, Reyhane and Asplund, Mikael and Carlsson, Niklas},
	month = may,
	year = {2026},
	pages = {104378},
}

@article{li_security_2025,
	title = {Security concerns for {Large} {Language} {Models}: {A} survey},
	volume = {95},
	issn = {2214-2126},
	shorttitle = {Security concerns for {Large} {Language} {Models}},
	url = {https://www.sciencedirect.com/science/article/pii/S2214212625003217},
	doi = {10.1016/j.jisa.2025.104284},
	urldate = {2026-04-24},
	journal = {Journal of Information Security and Applications},
	author = {Li, Miles Q. and Fung, Benjamin C. M.},
	month = dec,
	year = {2025},
	pages = {104284},
}

@article{zhu_group-capability-based_2025,
	title = {Group-{Capability}-{Based} {Access} {Control} with {Ring} {Signature}},
	volume = {90},
	issn = {2214-2126},
	url = {https://www.sciencedirect.com/science/article/pii/S2214212625000523},
	doi = {10.1016/j.jisa.2025.104014},
	urldate = {2026-04-24},
	journal = {Journal of Information Security and Applications},
	author = {Zhu, Xiaoying and Zou, Shihong and Xu, Guoai and Xi, Jinwen},
	month = may,
	year = {2025},
	pages = {104014},
}

@article{repetto_cybersecurity_2026,
	title = {Cybersecurity {Digital} {Twins}: {Concept}, blueprint, and challenges for multi-ownership digital service chains},
	volume = {96},
	issn = {2214-2126},
	shorttitle = {Cybersecurity {Digital} {Twins}},
	url = {https://www.sciencedirect.com/science/article/pii/S2214212625003369},
	doi = {10.1016/j.jisa.2025.104299},
	urldate = {2026-04-24},
	journal = {Journal of Information Security and Applications},
	author = {Repetto, M.},
	month = jan,
	year = {2026},
	pages = {104299},
}

@article{tegane_extended_2023,
	title = {An extended {Attribute}-based access control with controlled delegation in {IoT}},
	volume = {76},
	issn = {2214-2126},
	url = {https://www.sciencedirect.com/science/article/pii/S2214212623000571},
	doi = {10.1016/j.jisa.2023.103473},
	urldate = {2026-04-24},
	journal = {Journal of Information Security and Applications},
	author = {Tegane, Saher and Semchedine, Fouzi and Boudries, Abdelmalek},
	month = aug,
	year = {2023},
	pages = {103473},
}

@article{stamnes_karlsen_securing_2026,
	title = {Securing large language models: {A} quantitative assurance framework approach},
	volume = {97},
	issn = {2214-2126},
	shorttitle = {Securing large language models},
	url = {https://www.sciencedirect.com/science/article/pii/S2214212625003874},
	doi = {10.1016/j.jisa.2025.104351},
	urldate = {2026-04-24},
	journal = {Journal of Information Security and Applications},
	author = {Stamnes Karlsen, Sander and Yamin, Muhammad Mudassar and Hashmi, Ehtesham and Katt, Basel and Ullah, Mohib},
	month = mar,
	year = {2026},
	pages = {104351},
}

@article{cheng_s-cred_2024,
	title = {S-{Cred}: {An} accountable anonymous credential scheme with decentralized verification and flexible revocation},
	volume = {82},
	issn = {2214-2126},
	shorttitle = {S-{Cred}},
	url = {https://www.sciencedirect.com/science/article/pii/S2214212624000383},
	doi = {10.1016/j.jisa.2024.103735},
	urldate = {2026-04-24},
	journal = {Journal of Information Security and Applications},
	author = {Cheng, Haotian and Li, Xiaofeng and Zhao, He and Zhou, Tong and Yu, Bin and Sheng, Nianzu},
	month = may,
	year = {2024},
	pages = {103735},
}

@article{hu_towards_2024,
	title = {Towards accountable and privacy-preserving blockchain-based access control for data sharing},
	volume = {85},
	issn = {2214-2126},
	url = {https://www.sciencedirect.com/science/article/pii/S2214212624001686},
	doi = {10.1016/j.jisa.2024.103866},
	urldate = {2026-04-24},
	journal = {Journal of Information Security and Applications},
	author = {Hu, Qiwei and Huang, Chenyu and Zhang, Guoqiang and Cai, Lingyi and Jiang, Tao},
	month = sep,
	year = {2024},
	pages = {103866},
}

@inproceedings{merkle_digital_1988,
	address = {Berlin, Heidelberg},
	title = {A {Digital} {Signature} {Based} on a {Conventional} {Encryption} {Function}},
	isbn = {978-3-540-48184-3},
	doi = {10.1007/3-540-48184-2_32},
	language = {en},
	booktitle = {Advances in {Cryptology} — {CRYPTO} ’87},
	publisher = {Springer},
	author = {Merkle, Ralph C.},
	editor = {Pomerance, Carl},
	year = {1988},
	pages = {369--378},
}

@article{yang_delegation_2024,
	title = {Delegation {Security} {Analysis} in {Workflow} {Systems}},
	volume = {21},
	issn = {1941-0018},
	url = {https://ieeexplore.ieee.org/document/10052704},
	doi = {10.1109/TDSC.2023.3248602},
	number = {1},
	urldate = {2026-04-24},
	journal = {IEEE Transactions on Dependable and Secure Computing},
	author = {Yang, Benyuan and Hu, Hesuan},
	month = jan,
	year = {2024},
	pages = {229--240},
}

@inproceedings{kusters_accountability_2010,
	address = {New York, NY, USA},
	series = {{CCS} '10},
	title = {Accountability: definition and relationship to verifiability},
	isbn = {978-1-4503-0245-6},
	shorttitle = {Accountability},
	url = {https://dl.acm.org/doi/10.1145/1866307.1866366},
	doi = {10.1145/1866307.1866366},
	urldate = {2026-06-01},
	booktitle = {Proceedings of the 17th {ACM} conference on {Computer} and communications security},
	publisher = {Association for Computing Machinery},
	author = {Küsters, Ralf and Truderung, Tomasz and Vogt, Andreas},
	month = oct,
	year = {2010},
	pages = {526--535},
}

@article{yue_glassdb_2023-1,
	title = {{GlassDB}: {An} {Efficient} {Verifiable} {Ledger} {Database} {System} {Through} {Transparency}},
	volume = {16},
	issn = {2150-8097},
	shorttitle = {{GlassDB}},
	url = {https://dl.acm.org/doi/10.14778/3583140.3583152},
	doi = {10.14778/3583140.3583152},
	number = {6},
	urldate = {2026-06-02},
	journal = {Proceedings of the VLDB Endowment},
	author = {Yue, Cong and Dinh, Tien Tuan Anh and Xie, Zhongle and Zhang, Meihui and Chen, Gang and Ooi, Beng Chin and Xiao, Xiaokui},
	month = feb,
	year = {2023},
	pages = {1359--1371},
}

@techreport{laurie_certificate_2021,
	type = {Request for {Comments}},
	title = {Certificate {Transparency} {Version} 2.0},
	url = {https://datatracker.ietf.org/doc/rfc9162},
	doi = {10.17487/RFC9162},
	number = {RFC 9162},
	urldate = {2026-06-02},
	institution = {Internet Engineering Task Force},
	author = {Laurie, Ben and Messeri, Eran and Stradling, Rob},
	month = dec,
	year = {2021},
}

@inproceedings{alshammari_efficient_2025,
	title = {Efficient {Medical} {File} {Protection} via {Adaptive} {CRP}-based {Ephemeral} {Key} {Encapsulation} and {Robust} {Error} {Correction}},
	url = {https://ieeexplore.ieee.org/document/10903918},
	doi = {10.1109/CCWC62904.2025.10903918},
	urldate = {2026-08-27},
	booktitle = {2025 {IEEE} 15th {Annual} {Computing} and {Communication} {Workshop} and {Conference} ({CCWC})},
	author = {Alshammari, Adil and Miandoab, Dina Ghanai and Assiri, Sareh and Cambou, Bertrand and Riggs, Brit},
	month = jan,
	year = {2025},
	pages = {983--988},
}

@article{jakhar_blockchain-based_2024,
	title = {A blockchain-based privacy-preserving and access-control framework for electronic health records management},
	volume = {83},
	copyright = {2024 The Author(s), under exclusive licence to Springer Science+Business Media, LLC, part of Springer Nature},
	issn = {1573-7721},
	url = {https://link.springer.com/article/10.1007/s11042-024-18827-3},
	doi = {10.1007/s11042-024-18827-3},
	language = {En},
	number = {36},
	urldate = {2026-08-27},
	journal = {Multimedia Tools and Applications},
	publisher = {Springer},
	author = {Jakhar, Amit Kumar and Singh, Mrityunjay and Sharma, Rohit and Viriyasitavat, Wattana and Dhiman, Gaurav and Goel, Shubham},
	month = mar,
	year = {2024},
	pages = {84195--84229},
}
\end{document}